\patchcmd{\section}{\scshape}{\bfseries}{}{} \makeatletter \renewcommand{\@secnumfont}{\bfseries} \makeatother
\newrobustcmd\TableBold{\DeclareFontSeriesDefault[rm]{bf}{b}\bfseries}
\let\Oldsection\section
\renewcommand{\section}{\FloatBarrier\Oldsection}
\let\Oldsubsection\subsection
\renewcommand{\subsection}{\FloatBarrier\Oldsubsection}
\newtheorem{theorem}{Theorem}[section]
\theoremstyle{plain}
\newtheorem{proposition}[theorem]{Proposition}
\newtheorem{definition}[theorem]{Definition}
\newtheorem*{fact*}{Fact}
\newtheorem*{approach*}{Approach}
\newtheorem*{obstacle*}{Obstacle}
\newtheorem*{note*}{Note}
\theoremstyle{remark}
\newcommand{\nin}{\not \in}
\renewcommand{\geq}{\geqslant} \renewcommand{\leq}{\leqslant} 
\numberwithin{theorem}{section}
\numberwithin{equation}{section}
\newcommand{\sm}{\left(\begin{smallmatrix}} \newcommand{\esm}{\end{smallmatrix}\right)} \newcommand{\bpm}{\begin{pmatrix}} \newcommand{\ebpm}{\end{pmatrix}}
\newcommand{\rquotient}[2]{
	\mathchoice
	{
		\text{\raise01ex\hbox{$#1$}\Big/\lower1ex\hbox{$#2$}}%
	}
	{
		#1\,/\,#2
	}
	{
		#1\,/\,#2
	}
	{
		#1\,/\,#2
	}
}
\newcommand{\bigrquotient}[2]{
	\mathchoice
	{
		\text{\raise01ex\hbox{$#1$}\Bigg/\lower1ex\hbox{$#2$}}%
	}
	{
		#1\,/\,#2
	}
	{
		#1\,/\,#2
	}
	{
		#1\,/\,#2
	}
}
\newcommand{\sslash}{\slash \mkern-5.5mu \slash}
\def\deg{\rm{deg}}
\newcommand{\cpp}{{C\nolinebreak[4]\hspace{-.05em}\raisebox{.4ex}{\tiny \textbf{++}}}}
\newenvironment{restatethm}[1]{%
	\IfBlankTF{#1}
	{}
	{}%
	\manualtheoreminner
}{\endmanualtheoreminner}
\author{\vspace{-1.5em} \footnotesize Toni B\"ohnlein\textsuperscript{\textdagger}, P\'al Andr\'as Papp\textsuperscript{\textdagger}, Raphael S. Steiner\textsuperscript{\textdagger},\\ Christos K. Matzoros, and Albert-Jan N. Yzelman}
\address{\vspace{-0.6em}\texttt{\footnotesize\emph{\{toni.boehnlein; pal.andras.papp; raphael.steiner; albertjan.yzelman\}@huawei.com\\ \vspace{-0.2em} christos.konstantinos.matzoros@h-partners.com}} \\ \vspace{0.9em}
		\footnotesize Huawei Research Center Zurich, Computing Systems Lab,\\ Thurgauerstrasse 80, 8050 Zurich, Switzerland
}
\date{\today}
\keywords{Sparse triangular linear system solve, SpTrSV, SpTrSM, forward- and backward-substitution algorithm, barrier list scheduler, synchronous parallel algorithm.}
\title{Efficient parallel scheduling for\\ sparse triangular solvers}
\begin{document}
	
\begin{abstract}
We develop and analyze new scheduling algorithms for solving sparse triangular linear systems (SpTRSV) in parallel. Our approach produces highly efficient synchronous schedules for the forward- and backward-substitution algorithm. Compared to state-of-the-art baselines HDagg \cite{zarebavani2022hdagg} and SpMP \cite{park2014sparsifying}, we achieve a $3.32 \times$ and $1.42 \times$ geometric-mean speed-up, respectively. We achieve this by obtaining an up to $12.07 \times$ geometric-mean reduction in the number of synchronization barriers over HDagg, whilst maintaining a balanced workload, and by applying a matrix reordering step for locality. We show that our improvements are consistent across a variety of input matrices and hardware architectures.
\end{abstract}

\maketitle

\renewcommand*{\thefootnote}{\fnsymbol{footnote}}
\footnotetext[2]{Joint first authors; listed in alphabetical order.}
\renewcommand*{\thefootnote}{\arabic{footnote}}

\setcounter{tocdepth}{1}
\tableofcontents

\section{Introduction}
\label{sec:intro}

Systems of linear equations are ubiquitous and solving them fast numerically with high accuracy is essential to engineering, big data analytics, artificial intelligence, and various scientific fields. 
Key techniques in scaling to ever larger linear systems have been exploiting the sparsity of non-zero coefficients in modern algorithms, as well as leveraging the multi-core or multi-processor architectures of high-performance computing systems. 
However, whilst sparsity reduces computational load, the typically irregular distribution of non-zero elements complicates the development of efficient parallel algorithms, as the lack of structure hinders workload balancing and limits the ability to minimize communication between processors.

In this paper, we concern ourselves with solving sparse triangular systems of linear equations (SpTRSV) using parallel machines; i.e., solving a linear system $Lx = b$, where $L$ is a sparse triangular matrix and $b$ is a dense vector. 
Although solving sparse triangular linear systems marks a special case, it often arises as an important step in procedures solving more general linear systems. Some concrete examples are (sparse) LU, QR, and Cholesky decompositions, Gau{\ss}--Seidel, and so forth.
Efficient parallel-computation schedules for SpTRSV are of particular importance in applications where the same sparsity pattern is used repeatedly.
Such is the case in  simulations of various physical systems, for instance, ones that are based on the finite element method on a fixed mesh.

One of the main methods of solving SpTRSV is the forward-/backward-substitution algorithm.  
An execution of the algorithm on an instance may be captured by a directed acyclic graph (DAG), with the vertices corresponding to the rows of the matrix and directed edges representing dependencies imposed by the non-zero entries, see Figure~\ref{fig:example_dag}.
Finding a parallel execution of the forward-/backward-substitution algorithm directly corresponds to solving the parallel-scheduling problem on the corresponding DAG.

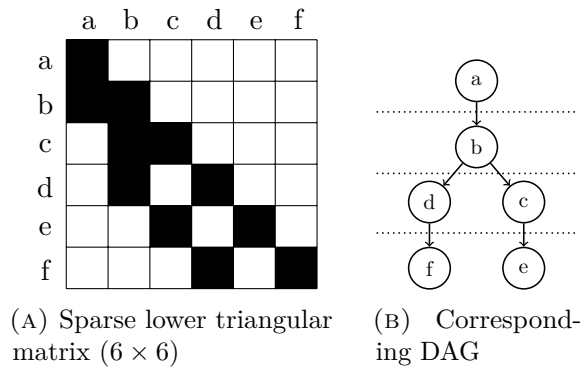
\begin{figure}[!htp]
	\centering
    \hspace{-0.01\textwidth}
	\begin{subfigure}[b]{0.27\textwidth}
		\centering
		\begin{tikzpicture}[scale=0.55] 
			\draw[step=1,black,thin] (0,0) grid (6,6);
			
			\foreach \i/\j in {1/1, 2/1, 2/2, 3/2, 3/3, 4/2, 4/4, 5/3, 5/5, 6/4, 6/6} {
				\fill[black] (\j-1, 6-\i) rectangle (\j, 7-\i);
			}
			
			\foreach \i/\l in {1/a, 2/b, 3/c, 4/d, 5/e, 6/f} {
				\node[align=left] at (-0.5, 6-\i+0.45) {\l};
			}
		
			\foreach \i/\l in {1/a, 2/b, 3/c, 4/d, 5/e, 6/f} {
				\node[anchor=base] at (-0.9+\i+0.45, 6.25) {\l};
			}
		\end{tikzpicture}
		\caption{Sparse lower triangular matrix ($6\times 6$)} \label{sfig:small-matrix}
	\end{subfigure}
    \hspace{0.02\textwidth}
	\begin{subfigure}[b]{0.17\textwidth}
		\centering
		\resizebox{1.0\textwidth}{!}{\begin{tikzpicture}[scale=0.6, ->, auto, node distance=1.2cm, thick, main node/.style={circle, draw, fill=white, minimum size=7.5mm}]
			
			\node[main node] (a) {a};
			\node[main node] (b) [below of=a, yshift=-0.0cm, xshift=-0cm] {b};
			\node[main node] (c) [below right of=b, yshift=-0.15cm, xshift=0cm] {c};
			\node[main node] (d) [below left of=b, yshift=-0.15cm, xshift=-0cm] {d};
			\node[main node] (f) [below of=d, yshift=-0.0cm, xshift=0cm] {f};
			\node[main node] (e) [below of=c, yshift=-0.0cm, xshift=0cm] {e};
			
			\draw[dotted, -] (-3,-0.94)--(3,-0.94) ;
			\draw[dotted, -] (-3,-2.8)--(3,-2.8);
			\draw[dotted, -] (-3,-4.6)--(3,-4.6);
			\path[every node/.style={font=\sffamily\small}]
			
			(a) edge (b)
			(b) edge (c)
			(b) edge (d)
			(c) edge (e)
			(d) edge (f);
		\end{tikzpicture}}
		\caption{Corresponding DAG} \label{sfig:small-dag}
	\end{subfigure}
    \hspace{-0.02\textwidth}
	\caption{A sparse lower triangular matrix \hyperref[sfig:small-matrix]{(a)} and its corresponding DAG for the forward-substitution algorithm \hyperref[sfig:small-dag]{(b)}. Each row of the matrix corresponds to a vertex in the DAG. An edge from vertex \(u\) to vertex \(v\) exists if and only if there is a non-zero entry in column \(u\) of row \(v\) in the matrix. The dotted lines in Figure \hyperref[sfig:small-dag]{(b)} separate the wavefronts of the DAG.}
	\label{fig:example_dag}
\end{figure}

In order to generate an efficient parallel schedule for the algorithm, one needs to:
\begin{enumerate}
	\item balance workload across machines, and \label{enum:into-parallel-sptrsv-criteria-work-balance}
	\item limit coordination overhead. \label{enum:into-parallel-sptrsv-criteria-coordination-overhead}
\end{enumerate}

Satisfying both of these needs simultaneously has proven to be challenging due to the irregular interdependence of computed values and the fine-grained nature of the problem.
Early algorithms include so-called wavefront schedulers \cite{anderson1989solving,saltz1990aggregation}, which repeatedly schedule all computations whose prerequisites are met, known as the wavefronts, cf.\@ Figure~\ref{sfig:small-dag}, followed by a synchronization barrier. 
They, however, suffer from large overhead stemming from frequent global synchronization \cite{park2014sparsifying}.
Similarly, early asynchronous approaches such as self-scheduling \cite{saltz1988run} had the drawback of incurring overheads due to numerous fine-grained synchronizations \cite{rothberg1992parallel}.

In a breakthrough paper, Park \emph{et al.\@} \cite{park2014sparsifying} reduced coordination overhead by combining these earlier ideas. Their scheduler SpMP, which remains a competitive baseline to date, is in essence an asynchronous wavefront scheduler: it allows machines to move onto the next wavefront if and only if all requisites have already been met for its portion of the next wavefront. 
They also developed a fast approximate transitive reduction to reduce the number
of synchronization points further. An alternate reduction in synchronizations has been made by Yilmaz \emph{et al.\@} \cite{yilmaz2020adaptive} by enforcing a bound by which machines may be out of sync. 

For synchronous schedulers, efforts have been directed towards increasing the computational load between synchronization barriers, thus decreasing the number of global synchronizations. For instance, Cheshmi \emph{et al.\@} \cite{cheshmi2018parsy} devise such methods for triangular matrices of a special structure, arising in Cholesky decompositions. For general sparse triangular matrices, a state-of-the-art baseline is the recent scheduler HDagg of Zarebavani \emph{et al.\@} \cite{zarebavani2022hdagg}. This algorithm develops efficient schedules by gluing together consecutive wavefronts if and only if a balanced workload can still be maintained and by pre-applying a DAG coarsening technique.

\subsection{Our contribution}

Our work continues along the same path of reducing the number of synchronization barriers. We present and analyze a new scheduling algorithm named \emph{GrowLocal} which is tailored specifically towards the SpTRSV application. In our experiments, we establish that this algorithm produces significantly superior parallel schedules compared to the baseline methods. Specifically, GrowLocal achieves a reduction in execution time of $1.42 \times$ compared to SpMP and of $3.32 \times$ compared to HDagg, on the SuiteSparse Matrix Collection benchmark \cite{davis2011university} used by previous studies, see Figure~\ref{fig:intro_plot}. 
We further evaluate GrowLocal on pre-processed variants of the SuiteSparse matrices motivated by applications, and observe speedups of up to $1.80 \times$ and $2.20 \times$ over SpMP and HDagg, respectively. 
On synthetic random matrices that are hard to schedule by design, the difference to the baselines is even larger: the algorithm achieves a speed-up of $2.50 \times$ compared to SpMP and $10.12 \times$ compared to HDagg in execution time.

\begin{figure}[!htp]
	\centering
	\includegraphics[scale=0.55]{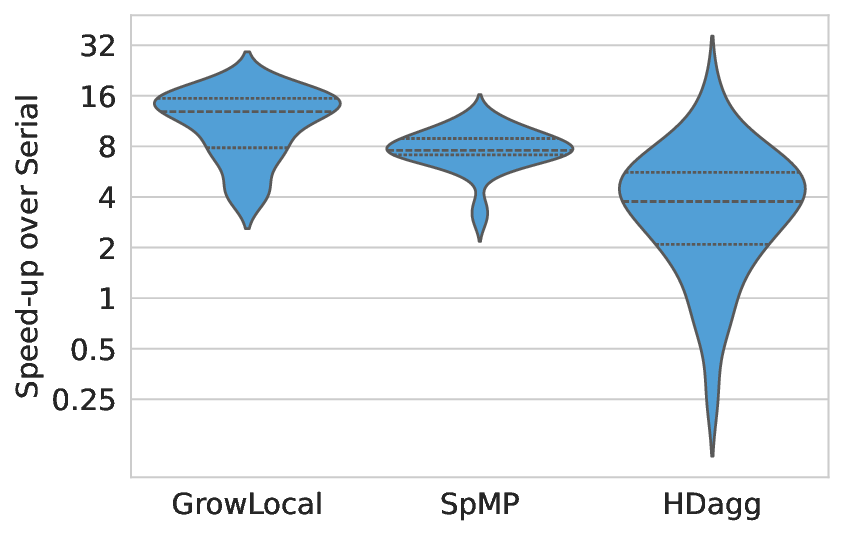}
	\captionof{figure}{Geometric mean and interquartile ranges of speed-ups over Serial of our algorithms on the SuiteSparse Matrix Collection \cite{davis2011university} on an Intel x86 machine using 22 cores.}
	\label{fig:intro_plot}
\end{figure}

The algorithm obtains these speed-ups by significantly reducing the number of synchronization barriers required: we report a $12.07 \times$ reduction in the number of barriers relative to HDagg on the SuiteSparse data set, whilst maintaining a good workload balance. The results also show that our scheduler provides consistent improvements over several different computing architectures and types of input matrices. The running time of the scheduling algorithm itself is also comparable to the state-of-the-art baselines, making it a viable tool for various applications.

In summary, the main contributions of our paper are:
\begin{itemize}
	\item a novel algorithm for generating efficient parallel schedules for SpTRSV execution;
	\item extensions of previous DAG coarsening techniques that enhance the schedules, and a short theoretical proof that these preserve acyclicity; and
	\item experiments confirming that the above schedulers achieve significant speed-ups over the SpMP and HDagg baselines, on various architectures and data sets, including an ablation study of the individual techniques proposed.
\end{itemize}

\subsubsection{GrowLocal scheduling algorithm} \label{sec:intro-list-scheduler-with-barriers}

There are numerous prior works on parallel DAG scheduling in the literature. When the number of cores is limited, the best results are often achieved by so-called list scheduling algorithms~\cite{graham1969bounds,adam1974comparison, hwang1989scheduling, radulescu2002low, mingsheng2003efficient}, which schedule the vertices in a topological order according to some priority function~\cite{wang2018list}. On the other hand, DAG scheduling with barrier synchronization is a somewhat different setting, and there are only a few previous works that address this problem. One state-of-the-art example here is the HDagg algorithm mentioned before~\cite{zarebavani2022hdagg}, which can also be interpreted as scheduler for general DAGs. Besides this, the idea of adapting list schedulers to a barrier synchronization setting has also been explored recently by Papp \emph{et al.}~\cite{Papp2024Efficient} for abstract bulk-synchronous-parallel (BSP) scheduling.

Our GrowLocal algorithm takes a rather different approach than these previous methods, but also incorporates some of their underlying strengths. On a high level, GrowLocal considers a parameter $\alpha$, and tries to form the part of the schedule until the next synchronization barrier (the next so-called \emph{superstep}) by assigning approximately $\alpha$ vertices to each of the cores before this next barrier. The parameter $\alpha$ is then iteratively increased, examining larger parts of the DAG as the potential next superstep, as long as this is possible while also ensuring a sufficient amount of parallelization between the cores.

During the development of the schedule, the algorithm always maintains the set of vertices that are ready to be executed, i.e., all their parents have been computed. At any point during the algorithm, if we consider the current superstep, such a ready vertex $v$ may be executable either on any of the cores (if all parents of $v$ were computed before the last barrier), only on a specific core $p$ (if a parent of $v$ was computed on $p$ since the last barrier), or on none of the cores (if we computed parents of $v$ on multiple cores since the last barrier). When selecting the next vertices to assign to a core $p$ during our algorithm, GrowLocal first prioritizes vertices that can only be executed on the core $p$ before the next barrier. This is inspired by the scheduler of~\cite{Papp2024Efficient}, and it ensures that we can compute significantly more vertices before having to insert a new barrier.

Apart from this, GrowLocal simply selects vertices to assign to a core based on their IDs in order to group neighboring vertices onto the same core and superstep. 
This leads to significantly better locality for the developed schedule than in case of, e.g., list schedulers, and this has a large positive impact on the overall performance of the SpTRSV computation.

\subsubsection{Coarsening} \label{sec:intro-coarsening}
Graph coarsening techniques are widely applied in graph partitioning tools \cite{ishiga1975logic, karypis1997multilevel, Schlag2020_1000105953}, where they greatly reduce the size of the graph and improve data locality. These techniques can also be applied to DAG scheduling \cite{popp2021multilevel, zarebavani2022hdagg}, where they can further help to reduce the number of synchronization steps on top of the aforementioned benefits.
Following the coarsening, the scheduling algorithm is applied to the coarse graph and the resulting schedule is subsequently pulled back to the original graph to obtain the final schedule.
In order to produce a valid scheduling problem, the coarsening methods are required to preserve the acyclicity of DAGs. Methods that fulfill this criteria have been studied in several works before, see, for example, \cite{cong1994acyclic, fauzia2013beyond, herrmann2017acyclic, zarebavani2022hdagg} and references therein. 

In Section \ref{sec:acyclic-coarsening}, we introduce the concept of \emph{cascades} to generalize the coarsening techniques utilized in \cite[\S 4]{cong1994acyclic} and \cite[\S IV.B]{zarebavani2022hdagg}.
We then formally prove that coarsening techniques based on cascades always preserve acyclicity.
In Section \ref{sec:eval-funnel}, we evaluate the effect of the coarsening algorithm developed in Section \ref{sec:acyclic-coarsening} on our scheduling algorithm, GrowLocal.

\subsubsection{Reordering} \label{sec:intro-reordering} Besides the algorithms above, we also apply a matrix reordering step to drastically improve data locality during the SpTRSV computation. Specifically, once the schedule is developed, we symmetrically permute the matrix according to the schedule, ensuring that values computed after each other on the same core are close to each other in this permuted representation. This idea has already been explored by Rothberg--Gupta in the 1990s \cite{rothberg1992parallel}, but it has not been applied in modern SpTRSV baselines, which instead try to make use of existing data locality when deriving a schedule.

\subsubsection{Block parallel scheduling} \label{sec:intro-parallel-scheduling}

A known optimization technique for parallel SpTRSV execution is to break the lower triangular matrix into blocks \cite{anderson1989solving, mayer2009parallel, ahmad2021split, yilmaz2020adaptive}. These blocks may be on the diagonal, which corresponds to a smaller instance of (sparse) triangular solve, or completely off the diagonal, which corresponds to a (sparse) matrix-vector multiplication. The separation of the easily parallelizable (sparse) matrix-vector-multiplication blocks from the hard to parallize (sparse) triangular blocks has been particularly impactful for GPU implementations \cite{liu2016synchronization, lu2020efficient}.

In this paper, we use this block decomposition to run the GrowLocal scheduling algorithm in parallel on each (sparse) triangular block. The resulting synchronous schedules can then be combined one after the other (with a synchronization barrier between individual schedules) to a schedule for the whole triangular matrix. This leads to a super-linear speed-up in the scheduling time whilst having a moderate effect on the parallel SpTRSV solve time.

\subsection{Additional related work} \label{section:additional_work}

Besides the forward-/backward-substitution algorithm, there are also other methods for solving sparse triangular systems, for example inversion. For this method, we mention the memory-optimal algorithms developed in previous works \cite{alvarado1993optimal, pothen1992fast}.

\section{Background}
\label{sec:prelim}

\subsection{Graph notation} We model our computations as a directed acyclic graph (DAG) $G=(V,E)$, which consists of a set of vertices $V$ and a set of directed edges $E \subseteq V \times V$. For any vertex $v \in V$, the sets of vertices $\{u \, | \, (u,v) \in E \}$ and $\{u \, | \, (v,u) \in E \}$ are called the \emph{parents} of $v$ and the \emph{children} of $v$, respectively. The \emph{in-} and \emph{out-degree} of $v$, denoted by ${\rm deg}^-(v)$ and ${\rm deg}^+(v)$, respectively, are the number of parents and children of $v$. The \emph{degree} of $v$, denoted by ${\rm deg}(v)$, is the sum of its in- and out-degree. If a vertex of the DAG has no parents/children, then it is called a \emph{source/sink} vertex, respectively. The DAG in our model is also complemented by vertex weights $\omega : V \rightarrow \mathbb{Z}_{>0}$ to indicate the compute cost of each operation.
\subsection{Problem definition and notation}
\label{sec:problem-description-and-forward-backward-substitution-algorithm}

When solving sparse triangular systems, we are given a triangular matrix $A = (A_{i,j})_{i,j=1,\dots,n} \in \mathbb{R}^{n \times n}$, a dense vector $b = (b_1,\dots,b_n)^T \in \mathbb{R}^n$, and the goal is to solve the equation $Ax=b$ for the vector $x = (x_1,\dots, x_n)^T \in \mathbb{R}^n$. We assume that $A$ is non-singular, such that all its diagonal elements are non-zero. In case of a lower triangular matrix $A$, there is a natural \textit{forward-substitution algorithm} for the problem, which iterates through the rows of $A$ in order and computes the values of $x$ as $x_1=\tfrac{b_1}{A_{1,1}}$, $x_2 = \tfrac{b_2 - A_{2,1} x_1}{A_{2,2}}$, and, in general, as
\begin{equation} \label{eq:forward-substitution-algorithm}
x_i = \frac{1}{A_{i,i}} \left(b_i - \sum_{j=1}^{i-1} A_{i,j}  x_j \right).
\end{equation}
In case of an upper triangular matrix $A$, a backward-substitution algorithm follows symmetrically in the reverse direction.

In the forward-substitution algorithm \eqref{eq:forward-substitution-algorithm}, 
we say that
the computation of $x_i$ \emph{depends} on the value of $x_j$, for $j<i$, if and only if there is an increasing sequence $j= \ell_0 < \ell_1 < \dots < \ell_m = i$ such that each entry $A_{\ell_{k-1},\ell_{k}}$ is non-zero, for $k=1,\dots,m$. If there is no dependency between $x_i$ and $x_j$, the two corresponding operations can be executed in any order, in particular also in parallel. As such, the operations in the algorithm can naturally be represented as a DAG $G=(V,E)$, where $V = \{1, ..., n\}$, the vertex $i$ represents the $i$-th row of $A$, and, for any $i,j \in V$, we have a directed edge $(j, i) \in E$ if and only if $A_{i,j} \neq 0$. See  Figure~\ref{fig:example_dag} for an example. To indicate the compute cost of each operation, the weight $\omega(v)$ of each vertex $v \in V$ in the DAG is simply defined as the number of non-zero entries in the corresponding row of the matrix.

The parallel execution of this DAG then directly corresponds to a parallel execution of the SpTRSV. Many previous works found it more convenient to discuss their scheduling methods for this problem using this DAG representation.

The parallel-scheduling problem above can be most fittingly captured in a bulk-synchronous parallel (BSP) model \cite{valiant1990bridging} that assumes \emph{global synchronization barriers} to split the execution into so-called \emph{supersteps}. This model is also known as the XPRAM model \cite{valiant1990general}. A schedule in this model assigns each vertex, i.e., the computation of each $x_i$, to one of the $k$ available cores and to a given superstep. A valid schedule must fulfill the precedence constraints of the DAG and ensure that we always have a synchronization barrier between computing a value on one core and using it as input on another core.

\begin{definition}
A \emph{parallel schedule} of $G$ consist of assignments $\pi : V \rightarrow \{1, ..., k\}$ to cores and $\sigma : V \rightarrow \mathbb{Z}_{>0}$ to supersteps, which fulfill the following properties for each $(u,v) \in E$:
\begin{itemize}
 \item $\sigma(u) \leq \sigma(v)$;
 \item if $\pi(u) \neq \pi(v)$, then $\sigma(u) < \sigma(v)$.
\end{itemize}
\end{definition}
The total cost of a schedule is determined by the workload balance within each superstep and the number of synchronization barriers. 
The original BSP model includes also communication volume in its cost function. For the SpTRSV application, however, the communication happens in parallel to the computation and resolving the synchronizations. Hence, the latter two dominate the overall execution time. 
Synchronous methods from previous works apply the same scheduling model, although often without explicitly referring to BSP, XPRAM, or supersteps.

\section{The GrowLocal scheduler}
\label{sec:algorithms}

Our GrowLocal algorithm is tailored specifically to the DAG scheduling problem with synchronization barriers. The algorithm forms the supersteps one by one, always aiming to make the current superstep as large as possible while maintaining a good workload balance. The current superstep is formed through several \emph{iterations} with a superstep length parameter $\alpha$. The algorithm attempts to form a new superstep with approximately $\alpha$ vertices assigned to each core, and gradually increases $\alpha$ as long as this allows sufficient parallelization.

Specifically, in a single iteration with parameter $\alpha$, the algorithm first assigns (up to) $\alpha$ vertices to the first core, and considers the sum $\Omega_1$ of the weights of these vertices. It then assigns vertices up to total weight of at most $\Omega_1$ to the second core, third core, and so forth. Let $\Omega_p$ denote the total weight allocated to core $p$ in this iteration. We associate a parallelization score of
\begin{equation} \label{eq:parallelization-score}
	\beta = \frac{\sum_p \Omega_p}{\max_p \Omega_p + L}
\end{equation}
to the current iteration. Here, $L$ is a parameter reflecting the penalty (time cost) incurred by each new synchronization barrier\footnote{The value of $L$ may be architecture dependent. In this study, we set $L=500$ based on synchronization cycles and a small empirical evaluation.}. In order to consider the superstep allocation of the current iteration \emph{worthy}, the algorithm requires that its score $\beta$ is relatively large, i.e., close to the parallelization score achieved in the previous iterations.

In order to form a superstep, the algorithm begins with a minimal length $\alpha = 20$ iteration. This first iteration is always considered worthy, regardless of its parallelization score. Then, in each subsequent iteration, we consider a different choice for the next superstep: the assignments of the previous iteration are invalidated, the parameter $\alpha$ is increased by a factor of $1.5$, and a new potential superstep is formed, with more vertices assigned to each core. If the resulting parallelization score is still high enough, then the superstep allocation of this iteration is also considered worthy, and the process continues. Otherwise, the last worthy superstep allocation is finalized as the current superstep. The high-level pseudocode of the algorithm is outlined in Algorithm~\ref{alg:GrowLocal}.

Naturally, when assigning vertices to a specific core $p$ in a superstep, there may be numerous ready-to-compute vertices that we can choose from, and selecting among these is a key aspect to any scheduler. Similarly to the heuristic of~\cite{Papp2024Efficient}, our algorithm first prioritizes those vertices that are only computable on $p$ in this superstep, since some of their parents were assigned to $p$ in the current superstep. In lack of such vertices, GrowLocal simply selects the vertices with smallest IDs.

We note that while this ID-based selection may seem simple at first, it in fact plays a crucial role in the success of our algorithm.  Previous scheduling heuristics usually assign vertices to the different cores simultaneously in order to ensure work balance. In contrast to this, GrowLocal first assigns vertices to the first core, then to the second core, and so forth. With the ID-based selection, this often leads to schedules where the vertices on a core are more-or-less consecutive blocks in the matrix, which drastically improves locality during the computation. This is especially important in matrices from applications, which are often already ordered superbly with respect to locality, and thus preserving this is crucial. As such, GrowLocal can also be loosely understood as a method combining the strengths of previous DAG schedulers with barrier synchronization: similarly to~\cite{Papp2024Efficient}, it allows a priority-based choice between vertices when forming a superstep, but as in~\cite{zarebavani2022hdagg}, it preserves locality by aiming to assign consecutive vertices to the same core.

\begin{algorithm}[!t]
	\DontPrintSemicolon
	\SetNlSty{textsc}{}{}
	\SetAlgoNlRelativeSize{-1}
	\caption{Skeleton of GrowLocal scheduler \label{alg:GrowLocal}}
	\KwData{A vertex-weighted DAG $G=(V,E,\omega)$ and a set of cores $P=\{1,2,\dots,k\}$.}
	\KwResult{A schedule consisting of processor assignment $\pi:V \to P$ and superstep assignment $\sigma: V \to \mathbb{Z}_{> 0}$.}
	\BlankLine
	\KwRuleI{Vertices are prioritized according to
		\begin{enumerate}[\hspace{1.6cm}(i)]
			\item core exclusivity, and then
			\item smallest ID.
	\end{enumerate}}
	
	\BlankLine
	\While{\emph{not all vertices are assigned yet}}{
		$\alpha \leftarrow 20$ \;
		\While{\True \label{alg-line:superstep-attempt}}{
			\smallskip
			\tcp{I. Assign new vertices to each core}
			assign up to $\alpha$ vertices to core $1$ with \textbf{Rule I}\;
			$\Omega_1 \leftarrow$ total newly assigned weight to core $1$\;
			\For{{\rm core} $p=2,\dots, k$ {\rm in} {\rm order}}{
				$\Omega_p \leftarrow 0$\;
				\While{$\Omega_p \not \approx \Omega_1$ \And \emph{can assign to core} $p$ \label{alg-line:GrowLocal-weight-balance}}{
					assign vertex $v$ to core $p$ with \textbf{Rule I}\;
					$\Omega_p \leftarrow \Omega_p + \omega(v)$\;
				}
			}
			
			\tcp{II. Check for sufficient parallelism}
			
			$\beta \leftarrow \frac{\sum_p \Omega_p}{\max_p \Omega_p + L}$ \;
			
			\eIf{\emph{the parallelization score $\beta$ is high enough}}{
				consider current assignment as worthy\;
				undo new assignments up to the last barrier \;
				$\alpha \leftarrow 1.5 \times \alpha$\;
			}
			{
				finalize last worthy assignment as next superstep\;
				\Break inner loop\;
			}
		}
		
	}
\end{algorithm}

Under reasonably mild assumptions, one can also show that the running time of the algorithm is almost linear.

\begin{theorem} \label{thm:GrowLocal-complexity}

Assume that both the out-degrees and compute weights of the DAG are on the same order of magnitude. Then, the time complexity of the GrowLocal algorithm is $O(|E| \cdot \log |V|)$.
\end{theorem}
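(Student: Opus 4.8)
The plan is to account for the total running time by charging work to the edges of $G$, realizing the per-iteration work as priority-queue operations. First I would fix the data structures the implementation maintains: for each vertex, a counter of its not-yet-computed parents (governing readiness) and a tag recording whether, within the \emph{current} superstep, parents have been assigned on no core, on exactly one core $p$, or on two distinct cores; and, for each core $p$ together with one shared pool, a binary heap keyed by vertex ID holding the currently ready vertices that are exclusive to $p$, respectively free. With these, selecting the next vertex under \textbf{Rule I} is a constant number of heap extractions costing $O(\log|V|)$, and assigning a vertex $v$ relaxes its ${\rm deg}^+(v)$ out-edges, each relaxation decrementing a counter, possibly updating an exclusivity tag, and performing $O(1)$ heap operations. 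Hence one (tentative) assignment of $v$ costs $O\bigl((1+{\rm deg}^+(v))\log|V|\bigr)$.

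The heart of the argument is to control the repeated work caused by the retry loop, in which each iteration tentatively assigns vertices, scores the result, and---unless finalized---undoes these assignments before retrying with $\alpha$ scaled by $1.5$. I would first bound the number of vertices touched in a single iteration with parameter $\alpha$ by $O(k\alpha)$: core $1$ receives at most $\alpha$ vertices by construction, and since the compute weights (row non-zero counts) are all of the same order, the weight cap $\Omega_p \leq \Omega_1 \leq \alpha\,w_{\max}$ on every other core translates into an $O(\alpha)$ bound on its vertex count. Writing $D=\max_v {\rm deg}^+(v)$, the cost of this iteration is thus $O\bigl(k\alpha(1+D)\log|V|\bigr)$. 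Because $\alpha$ grows geometrically across the iterations forming one superstep (from $20$ up to some $\alpha_{\max}=O(|V|)$), these costs form a geometric series dominated by its last term, and the undo operations touch the same vertices and contribute only a constant factor. Consequently the total cost of forming one superstep is $O\bigl(k\,\alpha_{\max}(1+D)\log|V|\bigr)$.

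It remains to sum over supersteps. The finalized superstep is the last worthy iteration, whose parameter $\alpha_{\mathrm{worthy}}$ satisfies $\alpha_{\max}=1.5\,\alpha_{\mathrm{worthy}}$, so the number $n_s$ of vertices finalized in superstep $s$ is $\Theta(k\,\alpha_{\max})$, and the per-superstep cost is $O\bigl(n_s(1+D)\log|V|\bigr)$. Since every vertex is finalized in exactly one superstep, $\sum_s n_s=|V|$, giving a grand total of $O\bigl(|V|(1+D)\log|V|\bigr)$. Finally, the hypothesis that the out-degrees are all of the same order of magnitude gives $D=O(|E|/|V|)$ and $|E|=\Theta(|V|\cdot D)=\Omega(|V|)$, whence $|V|(1+D)=O(|E|)$ and the bound $O(|E|\log|V|)$ follows.

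The step I expect to be the main obstacle is the geometric amortization of the second paragraph: a naive accounting would multiply the per-iteration cost by the $O(\log|V|)$ iterations per superstep and lose an extra logarithmic factor. The point to make rigorous is that, thanks to the multiplicative increase of $\alpha$, the work of all earlier (discarded) iterations of a superstep---including the cost of undoing them and restoring the ready-set state at the last barrier---is a constant fraction of the work of its last iteration, so the retries cost only a constant factor overall. The two regularity hypotheses are precisely what let me pass between vertex counts, weights, and edge counts without incurring additional factors, so I would state them explicitly as the places where uniformity is invoked.
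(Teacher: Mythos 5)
Your overall architecture matches the paper's: bound the per-assignment cost by $O\bigl((1+\deg^+(v))\log|V|\bigr)$ via priority queues, show that the tentative assignments within one superstep total only a constant factor more than the finalized superstep, sum over supersteps to get $O(|V|)$ assignments, and convert to $O(|E|\log|V|)$ using the out-degree hypothesis. However, there is a genuine gap in the amortization step, and it is exactly at the point you yourself flag as the main obstacle. You assert that the number $n_s$ of vertices finalized in superstep $s$ is $\Theta(k\,\alpha_{\max})$, and you need the lower bound $n_s=\Omega(k\,\alpha_{\max})$ to charge the $O(k\,\alpha_{\max}(1+D)\log|V|)$ cost of all discarded iterations to the finalized vertices. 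But nothing in the algorithm guarantees that cores $2,\dots,k$ receive $\Omega(\alpha)$ vertices: the inner loop stops when a core can no longer be assigned ready vertices, so in a poorly parallelizable region (the very situation the score $\beta$ is designed to detect, and note that the first iteration is accepted unconditionally) the finalized superstep may contain only $\Theta(\alpha_{\max})$ vertices while your upper bound on the work is $\Theta(k\,\alpha_{\max})$. The geometric growth of $\alpha$ alone therefore only yields $O(k\cdot|E|\log|V|)$, with an extra factor of the number of cores.

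The missing ingredient is the worthiness criterion. The paper compares the \emph{actual} totals $\Gamma_{\Sigma}^{(i)}=\sum_p\Gamma_p^{(i)}$ of an earlier iteration $i$ and the accepted iteration $j$ directly, using that acceptance forces $\beta^{(j)}\geq 0.97\,\beta^{(i)}$; after translating weights into vertex counts via the homogeneity assumption, this gives $\Gamma_{\Sigma}^{(i)}\leq O(1)\cdot\frac{\Gamma_{\max}^{(i)}+L}{\Gamma_{\max}^{(j)}+L}\cdot\Gamma_{\Sigma}^{(j)}$, and the geometric growth of $\alpha$ then controls the ratio $(\Gamma_{\max}^{(i)}+L)/(\Gamma_{\max}^{(j)}+L)$ (with a separate, $O(1)$-many-iterations case analysis for the regime $\Gamma_{\max}^{(i)}<L$, and a separate bound for the single rejected iteration $j+1$). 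This is how the paper obtains $\sum_i\Gamma_{\Sigma}^{(i)}=O(\Gamma_{\Sigma}^{(j)})$ without assuming every core is saturated, and it is also where the additive constant $L$ in the score must be handled, a point your write-up does not address. To repair your proof you would need to either invoke this score-based comparison or accept the weaker $O(k\,|E|\log|V|)$ bound.
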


The precise formulation of the theorem and the proof are deferred to Section~\ref{sec:growlocal-complexity} of the supplementary material. On a high level, since the size of the iterations follows a geometric series, one can show the total number of speculative vertex assignments in a superstep is still only a linear factor more than the size of the finalized superstep. However, a rigorous proof of the theorem is much more technical due to the fact that our parallelization score also depends on the weights of the vertices and the parameter $L$. In the supplementary material, we also provide a brief experimental analysis, which confirms linear complexity.

One can also easily observe that the space requirement of the algorithm is simply $O(|E|)$.

\subsection{Block parallel scheduling} \label{sec:growlocal-parallel-block-schedule}

In order to reduce the overhead from scheduling even further, one can parallelize the scheduling process. Instead of directly parallelizing the algorithm, we split the scheduling problem into independent parts. We achieve this by subdividing the lower triangular matrix into smaller lower triangular matrix blocks along the diagonal as in Figure~\ref{fig:block-decomp}. 

\begin{figure}[htpb]
	\centering
    \hspace{-0.02\textwidth}
	\begin{subfigure}[b]{0.27\textwidth}
		\centering
		\begin{tikzpicture}[scale=0.52] 
		\draw[step=1,black,very thin] (0,0) grid (9,9);
		
		\filldraw[fill=orange!40!white, fill opacity=0.4, draw=orange, line width=1.2pt] (0,6) rectangle (3,9);
		\filldraw[fill=blue!40!white, fill opacity=0.4, draw=blue, line width=1.2pt] (3,3) rectangle (6,6);
		\filldraw[fill=purple!40!white, fill opacity=0.4, draw=purple, line width=1.2pt] (6,0) rectangle (9,3);
		
		\foreach \i/\j in {1/1, 2/1, 2/2, 3/2, 3/3, 4/2, 4/4, 5/3, 5/5, 6/4, 6/6, 7/7, 8/8, 9/9, 8/1, 8/3, 7/2, 7/3, 9/2, 9/7, 7/6} {
			\fill[black] (\j-1, 9-\i) rectangle (\j, 10-\i);
		}
		\end{tikzpicture}
		
		\caption{Decomposition of $9 \times 9$ matrix} \label{sfig:small-matrix2}
	\end{subfigure}
    \hspace{0.03\textwidth}
	\begin{subfigure}[b]{0.15\textwidth}
		\centering
		\resizebox{0.6\textwidth}{!}{\begin{tikzpicture}

    \node[anchor=center, rotate=90] at (-22pt,140pt) { DAG \large $1$};
    \node[anchor=center, rotate=90] at (-22pt,80pt) { DAG \large $2$};
    \node[anchor=center, rotate=90] at (-22pt,20pt) { DAG \large $3$};

    \draw[ultra thick, orange] (-12pt,168pt) rectangle (14pt,112pt);
    \draw[ultra thick, blue] (-12pt,108pt) rectangle (14pt,52pt);
    \draw[ultra thick, purple] (-12pt,48pt) rectangle (14pt,-8pt);

    \begin{scope}[thick, arrows=-stealth]
    \draw (0pt,160pt) -- (0pt,144pt);
    \draw (0pt,140pt) -- (0pt,124pt);
    \draw (0pt,100pt) -- (10pt,100pt) -- (10pt,60pt) -- (4pt,60pt);
    \draw (0pt,40pt) -- (10pt,40pt) -- (10pt,0pt) -- (4pt,0pt);
    \end{scope}

    \begin{scope}[very thick, densely dashed, arrows=-stealth]
    \draw (15pt,142pt) -- (28pt,142pt) -- (28pt,17pt) -- (15pt,17pt);
    \draw (15pt,138pt) -- (23pt,138pt) -- (23pt,82pt) -- (15pt,82pt);
    \draw (15pt,77pt) -- (23pt,77pt) -- (23pt,23pt) -- (15pt,23pt);
    \end{scope}

    \draw[black, fill=white] (0pt,0pt) circle (1.0ex);
    \draw[black, fill=white] (0pt,20pt) circle (1.0ex);
    \draw[black, fill=white] (0pt,40pt) circle (1.0ex);
    \draw[black, fill=white] (0pt,60pt) circle (1.0ex);
    \draw[black, fill=white] (0pt,80pt) circle (1.0ex);
    \draw[black, fill=white] (0pt,100pt) circle (1.0ex);
    \draw[black, fill=white] (0pt,120pt) circle (1.0ex);
    \draw[black, fill=white] (0pt,140pt) circle (1.0ex);
    \draw[black, fill=white] (0pt,160pt) circle (1.0ex);

\end{tikzpicture}}
		\caption{Sub-DAGs} \label{sfig:small-dag2}
	\end{subfigure}
    \hspace{-0.03\textwidth}
	\caption{Subdivision of a $9 \times 9$ lower triangular matrix into three $3 \times 3$ lower triangular matrix blocks \hyperref[sfig:small-matrix2]{(a)} and the three corresponding sub-DAGs with inter-DAG dependency \hyperref[sfig:small-dag2]{(b)}.}
	\label{fig:block-decomp}
\end{figure}
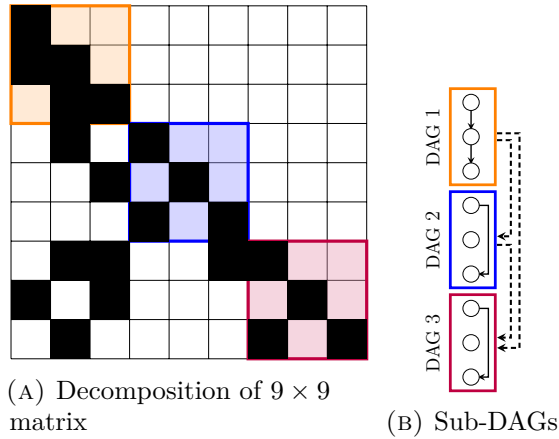

On each of these sub-problems, we can generate schedules in parallel. When combining these schedules, we just have to ensure that the individual schedules are combined one after the other. Equivalently, we can add to the superstep assignment of each vertex in each block the total number of supersteps in earlier block schedules.

We remark that for the weight of the vertices (in the DAG representation), we still use the number of non-zeros in the full matrix. This is in line with our SpTRSV kernel implementation.

\section{Acyclicity-preserving graph coarsening}
\label{sec:acyclic-coarsening}

Previous works discuss several ways to partition a DAG into clusters such that this coarsened graph remains acyclic, although often without a formal proof of this property. In a further generalization of earlier methods from Cong \emph{et al.} \cite[\S 4]{cong1994acyclic} and Zarebavani \emph{et al.} \cite[\S IV.B]{zarebavani2022hdagg}, we now introduce the concept of \emph{cascades} and prove that coarsening a DAG along such cascades is still guaranteed to preserve acyclicity. This is presented in Section \ref{sec:coarsen-generalities}. In Section \ref{sec:coarsen-algo}, we describe the graph coarsening algorithm used in our scheduling algorithms.

\subsection{Cascades} \label{sec:coarsen-generalities}

We begin with some formal definitions. Thereafter, we prove Proposition \ref{prop:cascade-coarsening}, demonstrating the utility of cascades for coarsening DAGs.

\begin{definition}
	Let $G=(V,E)$ be a directed graph and $P$ a partition of $V$. We define the \emph{coarsened graph} of $G$ along $P$ as the graph $(V',E')$, where $V'=P$, i.e., the vertices are the parts of the partition $P$, and for $U', W' \in V'$ we have that $(U',W') \in E'$ if and only if $U'\neq W'$ and $\exists (u,w) \in E$ such that $u \in U'$ and $w \in W'$. 
	We denote the coarsened graph of $G$ along $P$ by $G \sslash P$.
	\label{def:coarsened-graph}
\end{definition}

In other words, the coarsened graph $G \sslash P$ is the graph $G$ quotiented by the equivalence relation induced by $P$ with self-loops removed. The definition is easily extended to vertex-weighted graphs, where the weight of a part $U \in P$ is given as the sum the weights of its elements: $\omega(U)=\sum_{u \in U} \omega(u)$.

\begin{definition}
	Let $G=(V,E)$ be a directed graph. We call a subset of vertices $U \subseteq V$ a \emph{cascade} if and only if for every vertex $v \in U$ with an incoming cut edge, that is $(w,v) \in E$ such that $w \nin U$, and for every vertex $u \in U$ with an outgoing cut edge, that is $(u,w)\in E$ such that $w \nin U$, there is a (possibly trivial) directed walk from $v$ to $u$ in $G$.
	\label{def:cascade}
\end{definition}

\begin{proposition} Let $G=(V,E)$ be a directed acyclic graph and $P$ a partition of $V$ such that each set $U \in P$ is a cascade. Then, the coarsened graph $G \sslash P$ of $G$ along $P$ is acyclic.
	\label{prop:cascade-coarsening}
\end{proposition}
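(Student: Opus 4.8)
The plan is to argue by contradiction: I assume that the coarsened graph $G \sslash P$ contains a directed cycle, and then use the cascade property to manufacture a directed cycle in $G$ itself, contradicting acyclicity. Concretely, I fix a cycle $U_0, U_1, \dots, U_{m-1}, U_m = U_0$ in $G \sslash P$, where each $U_i \in P$ is a traversed part. Since the coarsened graph has no self-loops by Definition~\ref{def:coarsened-graph}, consecutive parts are distinct and $m \geq 2$.

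Next I would lift the coarsened edges back to $G$. By Definition~\ref{def:coarsened-graph}, each edge $(U_i, U_{i+1})$ of the cycle is witnessed by a genuine edge $(u_i, v_{i+1}) \in E$ with $u_i \in U_i$ and $v_{i+1} \in U_{i+1}$ (indices taken mod $m$). Thus each part $U_i$ carries a distinguished vertex $v_i$, the head of the cut edge incoming from $U_{i-1}$, which therefore has an incoming cut edge, and a distinguished vertex $u_i$, the tail of the cut edge outgoing to $U_{i+1}$, which therefore has an outgoing cut edge. The key step is then to invoke the cascade hypothesis on each $U_i$: because $v_i$ has an incoming cut edge and $u_i$ has an outgoing cut edge, Definition~\ref{def:cascade} furnishes a (possibly trivial) directed walk $W_i$ from $v_i$ to $u_i$ in $G$.

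I would then splice these pieces together with the witnessing cut edges into a single closed walk in $G$, following the pattern $v_i \xrightarrow{W_i} u_i \xrightarrow{(u_i,v_{i+1})} v_{i+1} \xrightarrow{W_{i+1}} \cdots$ and returning to $v_0$. This is a legitimate walk since the endpoint of each piece matches the start of the next, and every edge used is an honest edge of $G$. Finally I observe that this closed walk has positive length, as it contains the $m \geq 2$ cut edges $(u_i, v_{i+1})$. A nontrivial closed directed walk always contains a directed cycle (a standard fact, obtained by extracting a shortest closed subwalk), so $G$ would contain a directed cycle, contradicting that it is a DAG. Hence $G \sslash P$ is acyclic.

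The main obstacle I anticipate is bookkeeping rather than conceptual: one must ensure that the entry vertex $v_i$ and exit vertex $u_i$ chosen in each part are precisely the endpoints of the cut edges witnessing the two incident coarsened edges, so that the cascade property applies to exactly this pair; and one must confirm the assembled walk is genuinely nontrivial, so that the closing step ``a positive-length closed walk contains a cycle'' has content. The rest is routine concatenation of walks.
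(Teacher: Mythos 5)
Your proposal is correct and follows essentially the same route as the paper's proof: lift each coarsened edge of the cycle to a representative edge of $E$, and connect the entry and exit vertices inside each part via the directed walks guaranteed by the cascade property, yielding a nontrivial closed walk (hence a cycle) in $G$. The paper states this more tersely as lifting any walk in $G \sslash P$ to a walk in $G$; your version merely makes the bookkeeping explicit.
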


\begin{proof}
	We will show that any directed walk in $G \sslash P$ can be elevated to a directed walk in $G$. Therefore, the existence of directed cycles in $G \sslash P$ implies the existence of directed cycles in $G$.
	
	We lift a walk from $G \sslash P$ by mapping each edge to an arbitrary representative in $E$, whose endpoints necessarily lie in disjoint sets of the partition $P$ as $G \sslash P$ does not contain any self-loops, and connecting the endpoints via the directed walks guaranteed by the defining property of cascades.
\end{proof}

\subsection{Algorithm} \label{sec:coarsen-algo}

In our graph coarsening algorithm, we do not make use of the full strength of Proposition \ref{prop:cascade-coarsening}. Instead, we use a subcategory of cascades, which can be found efficiently. We call them \emph{funnels}, though they have been previously described under the name \emph{fanout-free cone}~\cite[\S 4]{cong1994acyclic}.  
Since the latter reference does not include an algorithm with a complexity analysis, we include them here in Algorithm \ref{alg:funnel}. 

\begin{definition} Let $G=(V,E)$ be a directed acyclic graph. We call a subset of vertices $U \subseteq V$ an \emph{in-funnel} if and only if $U$ is a cascade and there is at most one vertex $u \in U$ with an outgoing cut edge, that is $(u,w)\in E$ such that $w \nin U$.
	
	We analogously define an \emph{out-funnel}.
	\label{def:funnel}
\end{definition}

\begin{algorithm}[htpb]
	\DontPrintSemicolon
	\SetNlSty{textsc}{}{}
	\SetAlgoNlRelativeSize{-1}
	\caption{In-funnel graph coarsening.\label{alg:funnel}}
	\KwData{A directed acyclic graph $G=(V,E)$.}
	\KwResult{A partition $P$ such that every $U\in P$ is an in-funnel.}
	\BlankLine
	${\rm Partition } \gets \emptyset$\;
	${\rm Visited}[v] \gets \False, \, \forall v \in V$\;
	\For{$v \in V$ \emph{in reverse topological order}}{
		\lIf{${\rm Visited}[v]$}{\Continue}
		$U \leftarrow \emptyset$\;
		${\rm ChildrenCount}[u] \gets 0, \, \forall u \in V$\;
		${\rm PrioQueue}.{\rm insert}(v)$\;
		\While{\Not ${\rm PrioQueue}.{\rm empty}()$}{
			$w \leftarrow {\rm PrioQueue}.{\rm pop}()$\;
			$U.{\rm insert}(w)$\;
			\For{$u \in {\rm Parents}(w)$ \label{algline:funneldfs-parents} }{
				${\rm ChildrenCount}[u] \gets {\rm ChildrenCount}[u] + 1$\;
				\If{${\rm ChildrenCount}[u] = {\rm OutDegree}(u)$ \label{algline:funneldfs-children-check}}{
					${\rm PrioQueue}.{\rm insert}(u)$\;
				}
			}
		}
		\For{$u \in U$}{
			${\rm Visited}[u] \gets$ \True \;
		}
		${\rm Partition}.{\rm insert}(U)$\;
	}
	\Return {\rm Partition}\;
\end{algorithm}

The time complexity of the topological sort is $O(|V|+|E|)$ \cite{kahn1962topological} and its space complexity is $O(|V|)$. In order to bound the time complexity for the remaining part, we note that each parent vertex $v$ in Line \ref{algline:funneldfs-parents} is visited at most as many times as its out-degree, leading to an overall complexity of $O(|V|+|E|)$. The space complexity is easily seen to be $O(|V|)$.

In practice, before applying this graph coarsening, we remove some transitive edges from $G$ as this increases the likelihood of finding larger components. A complete transitive reduction is slow, though there are faster approximate transitive reductions, such as the `remove all long edges in triangles'-algorithm \cite[\S2.3]{park2014sparsifying} with a time complexity of $O(\sum_{v \in V} {\rm deg}(v)^2)$. This algorithm may be terminated early if a faster runtime is desired. In our implementation, we run the full (approximate) algorithm.

In our implementation, we also add a size/weight constraint on each component of the partition to Algorithm~\ref{alg:funnel} as otherwise a graph with only one sink vertex would be coarsened into a graph with only one vertex. Altogether, our coarsening algorithm Funnel is a more general\footnote{Every \emph{in-tree} is an \emph{in-funnel}.} and robust version of the coarsening algorithm in HDagg \cite{zarebavani2022hdagg}.

\section{Reordering for locality}
\label{sec:reorder}

Our algorithms already account for two of the most important factors in synchronous scheduling: work balance and the number of synchronization barriers. However, another major aspect that greatly influences the efficiency of a parallel SpTRSV execution is data locality, i.e., the number of required values that are already available in cache.

In order to address this, we apply a separate reordering step to ensure that vertices which are computed together are also stored together. The main idea of this approach has already been considered before, cf.~\cite{rothberg1992parallel}, but has not found its way into modern baselines. In particular, we consider a reordering (relabeling) the vertices of the input DAG based on the partitioning we developed, where we iterate through the supersteps in order, and within each superstep, we iterate through the cores in order. That is, we first start with the vertices $v$ with $\pi(v)=1$, $\sigma(v)=1$, then the vertices $v$ with $\pi(v)=2$, $\sigma(v)=1$, and so on, up to the vertices $v$ with $\pi(v)=k$, $\sigma(v)=1$, followed by vertices $v$ with $\pi(v)=1$, $\sigma(v)=2$, etc. Within a given core-superstep combination, we go through the vertices in the original order (which gives a topological ordering of the induced sub-DAG). 
We then symmetrically permute the input matrix and permute the right-hand-side vector of the SpTRSV problem accordingly.
Note that since the permutation provides a valid topological ordering of the vertices of the DAG, the resulting matrix is still lower triangular, resulting in an equivalent (symmetrically permuted) formulation of the SpTRSV problem.

We then execute the SpTRSV computation on the permuted problem, following our schedule, which ensures that vertices computed on the same core in the same superstep are stored close to each other, thus greatly improving locality during the computation.

\section{Experimental setup}
\label{sec:exp}

In this section, we present the experimental setup for the evaluation of our scheduling algorithm. Our implementations are available in the \emph{OneStopParallel} repository~\cite{OneStopParallel} on Github.

\subsection{Methodology}

For the evaluation, we used a standard SpTRSV implementation which iterates through the rows of the matrix which was stored in compressed sparse row (CSR) format \cite{tinney1967direct}. The algorithm was parallelized using the OpenMP library with the flags \texttt{OMP\_PROC\_BIND} and \texttt{OMP\_PLACES} set to \texttt{close} and \texttt{cores}, respectively.

We measured one hundred times the time it takes for a single SpTRSV execution using the chrono high-resolution clock. The measurements were taken whilst the system was `hot', meaning 
two untimed executions precede the timed executions. Between each SpTRSV execution, the right-hand-side vector $b$ was reset to all ones.
The experiments were repeated for each scheduling algorithm, data set, and CPU architecture type. The latter two are described in more detail in Section \ref{sec:data-sets} and Section \ref{sec:cpu-architectures}, respectively. If the interquartile range of the measurements corresponding to a scheduling method was too large, we rejected and re-ran all experiments on the same matrix and processor configuration.

The experiments for the schedulers HDagg and SpMP were carried out in the sympiler framework \cite{cheshmi2017sympiler, cheshmi2022transforming} as in \cite{zarebavani2022hdagg} with only minor adjustments to adhere to the aforementioned setup. All remaining schedulers were tested in our own framework. 

All scheduling algorithms are implemented in \cpp~and were compiled with GCC (11.4.0 or 11.5.0) using the optimization flag \texttt{-O3}.

\subsection{Data sets}
\label{sec:data-sets}

For the experiments, we used matrices from several data sets. The main data set is a sample from the SuiteSparse Matrix Collection \cite{davis2011university}, which constitutes a diverse set of matrices from a wide range of applications and was used in previous studies \cite{zarebavani2022hdagg}. We also consider two modified versions of this data set that are also relevant in their own right. Finally, these data sets are complemented with two randomly generated ones: uniformly random, i.e., Erd\H{o}s--R\'enyi matrices \cite{erdHos1959random}, and random with a bias 
towards the diagonal. The former are easier to parallelize as they have few (and thus large) wavefronts \cite{hasenplaugh2014ordering} and the latter are specifically designed to be harder to parallelize, though they admit good locality.

A useful general metric to understand the parallelizability of an SpTRSV execution 
is the average wavefront size, which can be calculated from the DAG representation by dividing the number of vertices by the length of the longest path. This metric is indicated for each matrix in the overview of the data sets in the supplement, see Section \ref{sec:tables-matrices}.

\subsubsection{SuiteSparse} \label{sec:florida-graphs}

From the SuiteSparse Matrix Collection \cite{davis2011university}, we used the lower triangular part of all the sparse real symmetric positive definite matrices. Out of those, we further restricted ourselves to large matrices with enough available parallelism, meaning
\begin{itemize}
	\item the number of floating point operations\footnote{The number of floating point operations is equal to twice the number of non-zeros minus the dimension of the matrix.} is at least $2$ million, and
	\item the average wavefront size is at least $44$, twice the number of cores utilized in the experiments.
\end{itemize}

We furthermore removed matrices from the data set which had the same sparsity pattern. An overview over some statistics of the matrices may be found in Table \ref{table:florida-graph-table} of the supplement.

\subsubsection{SuiteSparse METIS (METIS)} \label{sec:metis-graphs}
Zarebavani \emph{et al.\@} \cite{zarebavani2022hdagg} also use the real symmetric positive definite matrices from the SuiteSparse Matrix Collection as their data set. 
However, they use a modified version of this data set, which we also reproduce here. In their experiments, the matrices are first symmetrically permuted using a fill-reducing method of METIS \cite{karypis1998fast} and only then the lower triangular part is taken.
In general, this results in non-equivalent SpTRSV problems. The sparsity pattern of the matrices in this data set are representative of SpTRSV workloads in a Gau{\ss}--Seidel or a zero-fill-in incomplete Cholesky preconditioned conjugate gradient method for sparse symmetric solve. An overview over some statistics of the matrices may be found in Table \ref{table:hdagg-metis-graph-table} of the supplement.

\subsubsection{SuiteSparse Eigen incomplete Cholesky (iChol)} \label{sec:cholesky-graphs}

This data set consists of lower triangular matrices obtained after an incomplete Cholesky decomposition. The initial set of matrices are the same symmetric matrices used in the SuiteSparse data set\footnote{The matrix `bundle\_adj' segmentation-faults during the process and is thus excluded from the data set.}. 
The incomplete Cholesky decomposition was performed using the `IncompleteCholesky' method of Eigen \cite{eigenweb} using the built-in fill-reducing method `AMDOrdering'. An overview over some statistics of the matrices may be found in Table \ref{table:cholesky-graph-table} of the supplement.

\subsubsection{Erd\H{o}s--R\'enyi} \label{sec:Erdos-Renyi-graphs}

These are lower triangular matrices where each entry $(i,j)$, with $i > j$, is independently non-zero with a fixed probability $p$. The values of the non-zero non-diagonal entries we have chosen to be independently uniformly distributed in $[-2, 2]$. The absolute value of the diagonal entries we have chosen to be independently log-uniformly distributed in $[2^{-1}, 2]$ and their sign to be ${\pm}$ independently uniformly random\footnote{The change of distribution on the diagonal is to avoid numerical instability, in particular divisions by zero.}. The DAGs corresponding to these matrices are directed Erd\H{o}s--R\'enyi random graphs \cite{erdHos1959random}.

We generated thirty $N\times N$ matrices of this type with $N =$ 100,000 and $p = 10^{-4}, 5 \cdot 10^{-4}, 2 \cdot 10^{-3}$, ten of each given probability. An overview over some statistics of the matrices may be found in Table~\ref{table:erdos-renyi-graph-table} of the supplement.

\subsubsection{Narrow bandwidth} \label{sec:random-bandwidth-graph}
We also create a data set of random matrices which are much harder to parallelize by design, but have good locality.
Unlike the Erd\H{o}s--R\'enyi random matrices, we let the lower triangular matrix entry $(i,j)$, with $i > j$, be independently non-zero with probability $p \cdot \exp((1+j-i)/B)$, moving the non-zero entries closer to the diagonal. The entry values were chosen as in Section~\ref{sec:Erdos-Renyi-graphs}.

We generated thirty $N\times N$ matrices of this type with $N =$ 100,000 and $(p,B) = (0.14, 10)$, $(0.05, 20)$, $(0.03, 42)$, ten for each pair $(p,B)$. An overview over some statistics of the matrices may be found in Table \ref{table:random-bandwidth-graph-table} of the supplement.

\subsection{CPU architectures}
\label{sec:cpu-architectures}

The CPU architectures used for the experiments were x86 and ARM. The precise model and some specifications are given, respectively, as follows:

\begin{itemize}
	\item Intel Xeon Gold 6238T processor (x86), with 192 GB memory and theoretical peak memory throughput of 140.8 GB/s and 22 cores on a single socket; kernel version 5.14.0; GCC version 11.5.0; 
	\item AMD EPYC 7763 processor (x86), with 1024 GB memory and theoretical peak memory throughput of 204.8 GB/s and 64 cores on a single socket; kernel version 5.15.0; GCC version 11.4.0;
	\item Huawei Kunpeng 920-4826 (Hi1620) processor (ARM), with 512 GB memory and theoretical peak memory throughput of 187.7 GB/s and 48 cores on a single socket; kernel version 5.15.0; GCC version 11.4.0.
\end{itemize}

\section{Evaluation}
\label{sec:evaluation}

\subsection{Overall performance}

We present speed-ups of the forward-/backward-substitution algorithm based on parallel schedules compared to serial execution. 
The schedules of our proposed algorithm are benchmarked against those produced by the baseline methods, 
SpMP \cite{park2014sparsifying} and HDagg \cite{zarebavani2022hdagg}. 
The results, aggregated over the instances from the respective data set using the geometric mean of all pairs of runs, 
are displayed in Table~\ref{table:main-speed-up}. All experiments were conducted on the Intel x86 machine utilizing 22 cores.

On our main data set, SuiteSparse, the schedules generated by our GrowLocal algorithm achieves a geometric-mean speed-up of $1.42\times$ compared to SpMP and $3.32\times$ compared to HDagg. We also see similar results on the two variations of the SparseSuite data set: on METIS, GrowLocal obtains a $1.70\times$ and $1.77\times$ geometric-mean speed-up to SpMP and HDagg, respectively, and on iChol, it achieves a $1.80\times$ and $2.20\times$ speed-up to SpMP and HDagg, respectively. This shows that GrowLocal indeed significantly outperforms the baseline algorithms on these application-based data sets.

The differences are even larger on the Narrow Bandwidth matrices: here GrowLocal achieves a $2.50\times$ and $10.12\times$ factor improvement compared to SpMP and HDagg, respectively. This indicates that in the more challenging cases when our DAGs are particularly hard to parallelize, GrowLocal is even more superior to the baselines.

Finally, on the Erd\H{o}s--R\'enyi data set, the improvement is much smaller; since these DAGs are easier to parallelize, the differences between the algorithms become less relevant.

We also note that Funnel coarsening does not seem to further improve GrowLocal; we elaborate on this later in Section~\ref{sec:eval-funnel}.

\begin{table}[!htp]
\centering
\begin{tblr}{
		colspec = {  Q[l,m] || Q[c,m] | Q[c,m] | Q[si={table-format=1.2, table-number-alignment=center},c,m] | Q[si={table-format=1.2, table-number-alignment=center},c,m] },
		row{1} = {font=\bfseries, guard},
		rowhead = 1,
		rowfoot = 0,
	}
Data set & $\!\!$GrowLocal$\!\!$  & $\!\!$Funnel+GL$\!\!$ & SpMP & HDagg  \\ \hline \hline
SuiteSparse & \makebox[5ex][r]{\textbf{10.79}} & \makebox[5ex][r]{10.19} & 7.60 & 3.25 \\ \hline
METIS & \makebox[5ex][r]{\textbf{15.93}} & \makebox[5ex][r]{15.40} & 9.35 & 9.00 \\ \hline
iChol & \makebox[5ex][r]{\textbf{15.10}} & \makebox[5ex][r]{14.84} & 8.36 & 6.87 \\ \hline
Erd\H{o}s--R\'enyi	& \makebox[5ex][r]{\textbf{12.75}} & \makebox[5ex][r]{12.66} & 9.38 & 8.44  \\ \hline
Narr. bandw. & \makebox[5ex][r]{\textbf{9.04}} & \makebox[5ex][r]{8.26} & 3.56 & 0.88 
\end{tblr}
\medskip
\caption{Geometric mean of speed-ups over serial execution of GrowLocal with/without Funnel coarsening, 
compared to the baselines SpMP and HDagg on the Intel x86 machine using 22 cores taken over the data sets from Section \ref{sec:data-sets}.} \label{table:main-speed-up}
\end{table}

We also include a performance profile \cite{dolan2002benchmarking} based on the data generated from the SuiteSparse data set in Figure \ref{fig:performance_plot_florida}. The closer the line is to the top left corner, the better and more consistent the algorithm is across the data set. This shows that our algorithm is not only faster in execution time on average but it is so throughout the diverse SuiteSparse data set.

\begin{figure}[!htp]
	\centering
	\includegraphics[scale=0.45]{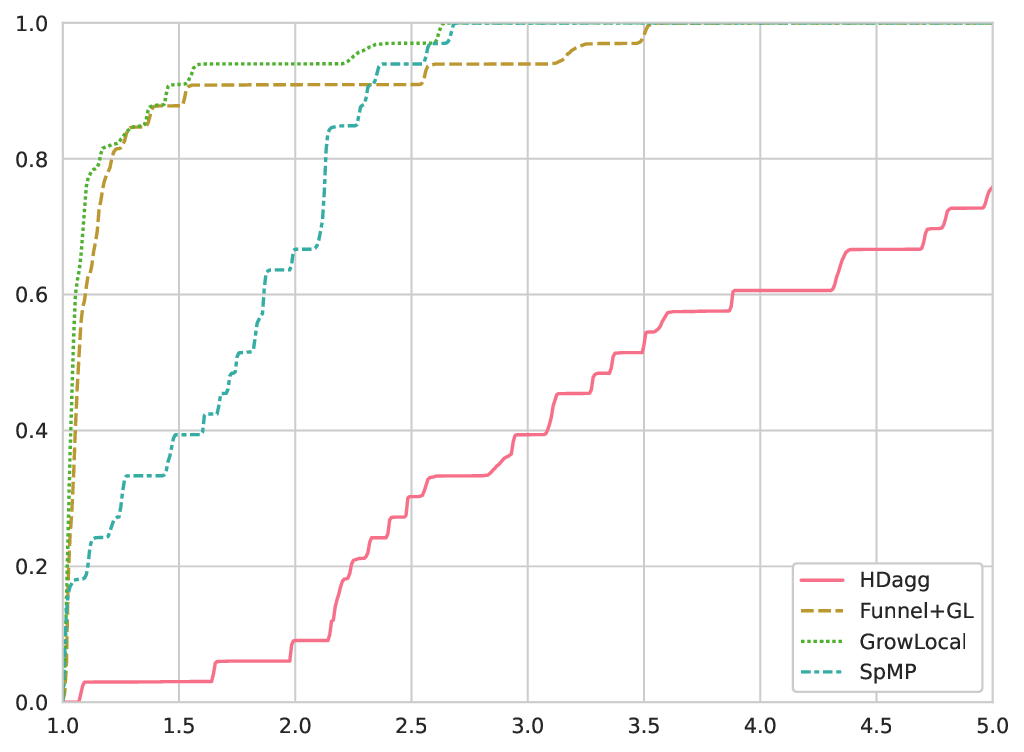}
	\captionof{figure}{Performance profiles of our algorithms on the SuiteSparse data set evaluated on the Intel x86 machine using 22 cores. The x-axis represents a threshold and the y-axis is the proportion of runs that are within this threshold times the fastest SpTRSV run on the respective matrix.}
	\label{fig:performance_plot_florida}
\end{figure}

\subsection{Fewer synchronization barriers}

The results in Table \ref{table:main-speed-up} show that our scheduler can significantly outperform the synchronous state-of-the-art HDagg. A further analysis shows that this is in part due to a substantial reduction in the number of synchronization barriers required during execution, whilst still maintaining a good work balance. In particular, Table~\ref{table:reduction-synch-barriers} shows the number of synchronization barriers relative to the number of wavefronts in our algorithm and HDagg. The data indicates a large, up to $14.99\times$, reduction of number of synchronization barriers compared to the number of wavefronts on the SuiteSparse data set. This is a reduction of up to $12.07\times$ compared to HDagg, which explains the significant speed-ups achieved by our methods. In general, we can observe a similar effect in the remaining data sets, but the difference is much smaller for Erd\H{o}s--R\'enyi, and much higher for narrow bandwidth matrices.

\begin{table}[!htp]
	\centering
	\begin{tblr}{
			colspec = {  Q[l,m] || Q[c,m] | Q[c,m] | Q[c,m] },
			row{1} = {font=\bfseries, guard},
			rowhead = 1,
			rowfoot = 0,
		}
		Data set & GrowLocal  & Funnel+GL & HDagg  \\ \hline \hline
		SuiteSparse & \makebox[5ex][r]{14.99} & \makebox[5ex][r]{\textbf{17.09}} & 1.24 \\ \hline
		METIS & \makebox[5ex][r]{16.55} & \makebox[5ex][r]{\textbf{21.83}} & 2.39 \\ \hline
		iChol & \makebox[5ex][r]{18.91} & \makebox[5ex][r]{\textbf{22.86}} & 1.62 \\ \hline
		Erd\H{o}s--R\'enyi & \makebox[5ex][r]{2.93} & \makebox[5ex][r]{\textbf{2.99}} &  1.25 \\ \hline
		Narrow bandw. & \makebox[5ex][r]{\textbf{51.12}} & \makebox[5ex][r]{42.00} & 1.10
	\end{tblr}
	\medskip
	\caption{Geometric mean of the reduction of the number of synchronization barriers relative to the number of wavefronts of the matrix within each data set from Section \ref{sec:data-sets}.} \label{table:reduction-synch-barriers}
\end{table}

\subsection{Impact of Funnel coarsening} \label{sec:eval-funnel}

The results in Table~\ref{table:main-speed-up} show that the DAG coarsening approach does not allow to further improve the schedules developed by GrowLocal on most of the data sets. This is an interesting contrast to the HDagg baseline, where coarsening is also a key ingredient of the scheduler. This suggests that GrowLocal is already rather strong at exploring the DAG structure and exploiting locality, and hence, the advantages of the coarsening step cannot compensate for the loss of structure in the graph.

However, besides this negative result, the Funnel coarsener also has several benefits that make it interesting in its own right. Firstly, it allows to run GrowLocal on a much smaller DAG, and as a results, the combined running time of Funnel+GrowLocal is often lower than GrowLocal alone; we will quantify this later in Section~\ref{sec:profitability}. As such, Funnel+GrowLocal can be a more desirable alternative when the scheduling time is also a critical factor. Secondly, Funnel coarsening allows one to reduce the number of synchronization barriers even further: while GrowLocal achieves a $12.07\times$ geo-mean reduction compared to HDagg, Funnel+GrowLocal together achieves a $13.76\times$ geo-mean reduction of synchronization barriers, which is of independent interest.

\subsection{Impact of reordering}

We separately analyze the impact of the reordering step on the performance. Table \ref{table:reorder-comparison} compares the speed-ups achieved by our algorithm with and without the reordering component from Section~\ref{sec:reorder}. The numbers show that reordering is indeed a valuable ingredient of our scheduler. The data also confirms that even without the reordering, the algorithm still outperforms HDagg notably, which is the current state-of-the-art synchronous baseline, cf.\@ Table \ref{table:main-speed-up}. 

\begin{table}[!htp]
	\centering
	\begin{tblr}{
			colspec = {  Q[l,m] || Q[si={table-format=2.2, table-number-alignment=center},c,m] | Q[si={table-format=2.2, table-number-alignment=center},c,m] },
			row{1} = {font=\bfseries, guard},
			rowhead = 1,
			rowfoot = 0,
		}
		Data set & Reordering  & No Reordering  \\ \hline \hline
		SuiteSparse & 10.79 & 8.62 \\ \hline
        METIS & 15.93 & 15.21  \\ \hline
        iChol & 15.10 & 15.02 \\ \hline
		Erd\H{o}s--R\'enyi	& 12.75 & 7.87  \\ \hline
		Narrow bandw. & 9.04 & 6.96 
	\end{tblr}
\medskip
	\caption{Geometric mean of speed-ups relative to Serial of GrowLocal with/without permuting the matrix data according to the computed schedule. Experiments were conducted on the Intel x86 machine using 22 cores.
	} \label{table:reorder-comparison}
\end{table}

\subsection{Performance across different architectures}

We show the performance gains of our algorithm over the different processors and architectures in Table~\ref{table:arch-comparison}. The data confirms that our algorithm consistently outperforms the baselines across all considered architectures. We note that the improvement relative to Serial can be in a significantly different range due to the properties of the distinct architectures. SpMP is omitted for the ARM architecture because its implementation is x86-specific.

\begin{table}[!htpb]
	\centering
	\begin{tblr}{
			colspec = {  Q[l,m] || Q[si={table-format=2.2, table-number-alignment=center},c,m] | Q[si={table-format=1.2, table-number-alignment=center},c,m] | Q[si={table-format=1.2, table-number-alignment=center},c,m] },
			row{1} = {font=\bfseries, guard},
			rowhead = 1,
			rowfoot = 0,
			cell{4}{3} = {guard},
		}
		Machine & $\!\!$GrowLocal$\!\!$  &  SpMP & HDagg  \\ \hline \hline
		Intel x86 & 10.79 & 7.60 & 3.25  \\ \hline
		AMD x86	& 5.20 & 3.65 & 1.98  \\ \hline
		$\!\!$ Huawei ARM $\!\!$ & 9.27  & n/a & 2.16		
	\end{tblr}
\medskip
	\caption{Geometric mean speed-ups relative to Serial of our algorithms over different machines and processor architectures. Experiments were conducted using 22 cores on the SuiteSparse data set.} \label{table:arch-comparison}
\end{table}

\subsection{Scaling with the number of cores} \label{sec:scaling}

Another natural question is how our algorithm scales with a growing number of cores. To examine this, we illustrate the speed-ups (over serial execution) for different numbers of cores in Table \ref{table:nr-core-scaling}. We note that this experiment was conducted on the AMD x86 machine as it has $64$ available cores on a single socket.
\begin{table}[!htp]
	\centering
	\begin{tblr}{
			colspec = {  Q[l,m] || Q[c,m] | Q[c,m] | Q[c,m] | Q[c,m]  | Q[c,m]  | Q[c,m] },
			row{1} = {font=\bfseries},
			rowhead = 1,
			rowfoot = 0,
		}
		Algorithm   & 4 &  16 & 32 & 48 & 56 & 64  \\ \hline \hline
		GrowLocal  & 2.63 & 4.15 & 5.34 & 5.70 & 5.76 & 5.85 \\ 
	\end{tblr}
\medskip
	\caption{Geometric mean of speed-ups relative to Serial of GrowLocal  for different number of cores on the AMD x86 machine taken over the SuiteSparse data set.
	} \label{table:nr-core-scaling}
\end{table}
As one sees, additional cores have diminished or negative returns at the higher end of number of cores. A reason for this is the average wavefront size which is a proxy for the amount of parallelism available. If we split the SuiteSparse data set into groups according to their average wavefront size, we see that these groups scale to different number of cores, see Figure \ref{fig:scaling_epyc}. This shows that our algorithm does scale if the matrices allow for it. 
\begin{figure}[!htp]
\centering
\includegraphics[scale=0.54]{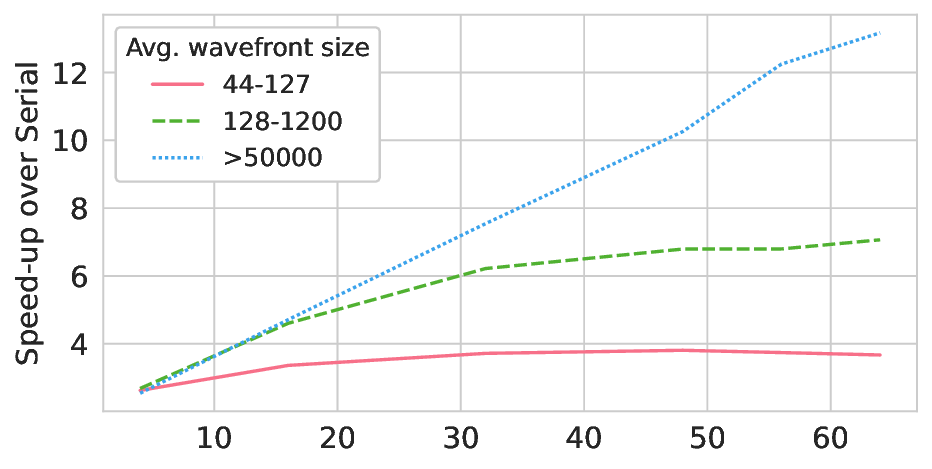}
	\caption{Geometric mean speed-ups of GrowLocal for different number of cores on the AMD x86 machine taken over the SuiteSparse data set categorized by average wavefront size.}
\label{fig:scaling_epyc}
\end{figure}

\subsection{Amortization of scheduling time}
\label{sec:profitability}

In this section, we consider the gain of the different scheduling algorithms when the scheduling time is taken into account. We measure the amortization threshold as the following ratio\footnote{If the parallel execution is slower than the serial one, then the amortization threshold is defined as $+ \infty$.}: 
\begin{equation}
	\frac{\text{scheduling\_time}}{\text{serial\_execution\_time} - \text{parallel\_execution\_time}} \: .
\end{equation}
The same metric was considered by Zarebavani \emph{et. al.} \cite[\S V.B]{zarebavani2022hdagg} and expresses how often the schedule needs to be reused in order to justify the time spent on computing it. Table \ref{table:profitability} presents the amortization threshold for GrowLocal and the two baselines, with the 25th percentile, median, and 75th percentile values shown for each algorithm.
The results indicate that the amortization threshold of GrowLocal (both with and without Funnel) is of a similar magnitude to that of SpMP. 
HDagg has a significantly higher amortization threshold on the SparseSuite data set. However, we remark that, e.g., on the METIS pre-processed data set we observed a comparable median amortization threshold of $44.21$ for HDagg.

\begin{table}[!htpb]
	\centering
	\begin{tblr}{
			colspec = {  Q[l,m] || Q[si={table-format=3.2, table-number-alignment=center},c] | Q[si={table-format=3.2, table-number-alignment=center},c] | Q[si={table-format=4.2, table-number-alignment=center},c]  },
			row{1} = {font=\bfseries, guard},
			rowhead = 1,
			rowfoot = 0,
			cell{6}{2} = {font=\bfseries},
			cell{6}{3} = {font=\bfseries},
			cell{6}{4} = {font=\bfseries},
		}
		Algorithm & Q25 & Median & Q75  \\ \hline \hline
		GrowLocal & 23.78 & 26.12 & 30.28 \\ \hline
		Funnel+GL & 17.78 & 21.74 & 27.78 \\ \hline
		SpMP	& 3.65 & 5.51  & 8.41   \\ \hline
		HDagg	& 311.23 & 961.39  & 1848.80 
		
	\end{tblr}
	\medskip
	\caption{Amortization threshold of several scheduling algorithms on the SuiteSparse data set with the 25th percentile, median, and 75th percentile values shown for each algorithm. The data was collected on the Intel x86 machine using 22 cores.
	} \label{table:profitability}
\end{table}

\subsection{Block parallel scheduling} \label{sec:eval-block-parallel scheduling}

In order to further reduce the amortization threshold of GrowLocal, we now consider the effect of subdividing the matrix into blocks and applying the GrowLocal scheduler on each block in parallel, cf.\@ Section \ref{sec:growlocal-parallel-block-schedule}. In Table \ref{table:sched-threads-effects}, we record the effect of running GrowLocal with multiple scheduling threads on scheduling time, floating point operations per second, number of supersteps, and amortization threshold. We see that using multiple scheduling threads can lead to super-linear speed-up of scheduling time. This is because there are now several (long) egdes that never have to be considered, cf.\@ Figure \ref{fig:block-decomp}. We also see that the SpTRSV solve time is more significantly affected when using a higher number of scheduling threads. We note that these effects are heavily dependent on the matrix. Some matrices, such as `af\_shell7' and `bmwcra\_1' have approximately a $30\%$ performance drop in the SpTRSV solve time already when using just two blocks, whereas `bundle\_adj' and `Hook\_1498' are hardly affected even when using $16$ blocks. Despite the performance drops, we find a near linear decrease in the amortization threshold.

Table~\ref{table:sched-threads-effects} indicates, for instance, that using $6$ scheduling threads is a noteworthy compromise. This lowers the median amortization threshold to $4.54$, which is smaller than that of SpMP, cf.\ Table~\ref{table:profitability}, whilst maintaining a $1.05\times$ speed-up over SpMP. This makes GrowLocal superior on both metrics simultaneously.

\begin{table}[htpb]
	\centering
	\begin{tblr}{
			colspec = {  Q[c,m] || Q[si={table-format=2.2, table-number-alignment=center},c,m] | Q[si={table-format=1.2, table-number-alignment=center},c,m] | Q[si={table-format=1.2, table-number-alignment=center},c,m] | Q[si={table-format=2.2, table-number-alignment=center},c,m] },
			row{1} = {font=\bfseries, guard},
			rowhead = 1,
			rowfoot = 0,
		}
		Threads & {Sched.\@ \\ Time} & Flops/s & Supersteps & {Amort.\@ \\ Threshold }   \\ \hline \hline
		1  &  1.00 & 1.00 & 1.00 & 26.12  \\ \hline
		2  &  2.01 & 0.89 & 1.47 & 13.59  \\ \hline
		4  &  4.11 & 0.79 & 1.99 &  6.91  \\ \hline
		6  &  6.28 & 0.74 & 2.35 &  4.54  \\ \hline
		8  &  8.34 & 0.70 & 2.66 &  3.48  \\ \hline
		16 & 17.06 & 0.57 & 3.84 &  1.78  \\ \hline
		22 & 23.43 & 0.52 & 4.53 &  1.31
	\end{tblr}
\medskip
	\caption{Geometric means of relative speed-up of scheduling time, relative decrease in double precision floating point operations per second, and relative increase of number of supersteps of GrowLocal compared to using just a single scheduling thread, that is, a single scheduling block, together with the median amortization threshold of GrowLocal on the SuiteSparse data set. The data was collected on the Intel x86 machine using 22 cores.} \label{table:sched-threads-effects}
\end{table}

\section{Conclusion and future directions}

The results show that our GrowLocal scheduler indeed significantly speeds up the parallel SpTRSV kernel, reducing the execution time by a $1.42\times$ geometric-mean factor compared to SpMP and $3.32\times$ compared to HDagg on the SuiteSparse benchmark. The data also shows that the algorithm performs similarly well on multiple other data sets, and the improvements are consistent over various architectures. The scheduling time of GrowLocal is also competitive with the baselines, especially when combined with the block decomposition technique.

Future work may consider the adaptation of our algorithm to non-uniform memory access (NUMA) architectures. In particular, the AMD x86 data in Section~\ref{sec:scaling} confirms that our algorithm scales well to a high number of cores. However, when
solving SpTRSV on highly NUMA architectures, we expect 
 the parallel execution to be less effective. In order to adapt to such a NUMA setting, one should consider fundamental changes to the SpTRSV kernel and the data structures, such as partitioning the matrix local to threads, interleaving the vector, and reducing the need for global synchronization. 
 It is also an interesting question whether the scheduling algorithm can be efficiently adapted to NUMA, for example, by considering non-uniform bandwidth or latency. 
 To our knowledge, there are currently no scheduling algorithms for SpTRSV that directly account for such NUMA effects.

Another promising direction for future work is to combine our barrier list scheduling algorithms with other approaches that proved successful for SpTRSV in the past. For instance, one could seek to adapt GrowLocal to a semi-asynchronous setting as in SpMP, in order to allow for a more flexible parallel execution. This could allow for further speed-ups on top of our current results.

\clearpage

\appendix

\section{Tables of matrices} \label{sec:tables-matrices}

\noindent Here we provide some basic statistics of the matrices used in the experiments, cf.\@ Section \ref{sec:data-sets}.

\begin{longtblr}[
	theme = ams-theme,
	caption = {Matrices and statistics from SuiteSparse Matrix Collection \cite{davis2011university} used for the evaluation. The average wavefront size (Avg. wf) has been rounded down.},
	label = {table:florida-graph-table},
]{
	colspec = { Q[l,m] | Q[r,m] | Q[r,m] | Q[r,m]  },
	row{1} = {font=\bfseries},
	rowhead = 1,
	rowfoot = 0,
}
Matrix & Size & \#Non-zeros & Avg.\@ wf
\\ \hline \hline af\_0\_k101 & 503,625 & 9,027,150 & 74
\\ \hline af\_shell7 & 504,855 & 9,046,865 & 135
\\ \hline apache2 & 715,176 & 2,766,523 & 1,077
\\ \hline audikw\_1 & 943,695 & 39,297,771 & 203
\\ \hline bmw7st\_1 & 141,347 & 3,740,507 & 199
\\ \hline bmwcra\_1 & 148,770 & 5,396,386 & 204
\\ \hline bone010 & 986,703 & 36,326,514 & 470
\\ \hline boneS01 & 127,224 & 3,421,188 & 156
\\ \hline boneS10 & 914,898 & 28,191,660 & 386
\\ \hline Bump\_2911 & 2,911,419 & 65,320,659 & 283
\\ \hline bundle\_adj & 513,351 & 10,360,701 & 57,039
\\ \hline consph & 83,334 & 3,046,907 & 139
\\ \hline Dubcova3 & 146,689 & 1,891,669 & 44
\\ \hline ecology2 & 999,999 & 2,997,995 & 500
\\ \hline Emilia\_923 & 923,136 & 20,964,171 & 176
\\ \hline Fault\_639 & 638,802 & 14,626,683 & 143
\\ \hline Flan\_1565 & 1,564,794 & 59,485,419 & 200
\\ \hline G3\_circuit & 1,585,478 & 4,623,152 & 611
\\ \hline Geo\_1438 & 1,437,960 & 32,297,325 & 246
\\ \hline hood & 220,542 & 5,494,489 & 365
\\ \hline Hook\_1498 & 1,498,023 & 31,207,734 & 95
\\ \hline inline\_1 & 503,712 & 18,660,027 & 287
\\ \hline ldoor & 952,203 & 23,737,339 & 141
\\ \hline msdoor & 415,863 & 10,328,399 & 59
\\ \hline offshore & 259,789 & 2,251,231 & 75
\\ \hline parabolic\_fem & 525,825 & 2,100,225 & 75,117
\\ \hline PFlow\_742 & 742,793 & 18,940,627 & 118
\\ \hline Queen\_4147 & 4,147,110 & 166,823,197 & 342
\\ \hline s3dkt3m2 & 90,449 & 1,921,955 & 60
\\ \hline Serena & 1,391,349 & 32,961,525 & 298
\\ \hline shipsec1 & 140,874 & 3,977,139 & 67
\\ \hline StocF-1465 & 1,465,137 & 11,235,263 & 487
\\ \hline thermal2 & 1,228,045 & 4,904,179 & 991
\end{longtblr}

\begin{longtblr}[
theme = ams-theme,
caption = {Matrices and statistics from SuiteSparse Matrix Collection \cite{davis2011university} symmetrically permuted using the fill-reducing method `METIS\textunderscore NodeND' of \cite{karypis1998fast}. The average wavefront size (Avg. wf) has been rounded down.},
label = {table:hdagg-metis-graph-table},
]{ colspec = { Q[l,m] | Q[r,m] | Q[r,m] | Q[r,m]  }, row{1} = {font=\bfseries}, rowhead = 1, rowfoot = 0}
Matrix & Size & \#Non-zeros & Avg. wf
\\ \hline \hline af\_0\_k101\_metis & 503,625 & 9,027,150 & 610
\\ \hline af\_shell10\_metis & 1,508,065 & 27,090,195 & 1,065
\\ \hline apache2\_metis & 715,176 & 2,766,523 & 47,678
\\ \hline audikw\_1\_metis & 943,695 & 39,297,771 & 1,734
\\ \hline bmwcra\_1\_metis & 148,770 & 5,396,386 & 473
\\ \hline bone010\_metis & 986,703 & 36,326,514 & 1,326
\\ \hline boneS10\_metis & 914,898 & 28,191,660 & 2,401
\\ \hline bundle\_adj\_metis & 513,351 & 10,360,701 & 11,407
\\ \hline cant\_metis & 62,451 & 2,034,917 & 333
\\ \hline consph\_metis & 83,334 & 3,046,907 & 247
\\ \hline crankseg\_2\_metis & 63,838 & 7,106,348 & 86
\\ \hline ecology2\_metis & 999,999 & 2,997,995 & 62,499
\\ \hline Emilia\_923\_metis & 923,136 & 20,964,171 & 2,107
\\ \hline Fault\_639\_metis & 638,802 & 14,626,683 & 1,458
\\ \hline Flan\_1565\_metis & 1,564,794 & 59,485,419 & 2,569
\\ \hline G3\_circuit\_metis & 1,585,478 & 4,623,152 & 93,263
\\ \hline Geo\_1438\_metis & 1,437,960 & 32,297,325 & 2,887
\\ \hline gyro\_metis & 17,361 & 519,260 & 88
\\ \hline hood\_metis & 220,542 & 5,494,489 & 984
\\ \hline Hook\_1498\_metis & 1,498,023 & 31,207,734 & 4,059
\\ \hline inline\_1\_metis & 503,712 & 18,660,027 & 1,549
\\ \hline ldoor\_metis & 952,203 & 23,737,339 & 4,858
\\ \hline m\_t1\_metis & 97,578 & 4,925,574 & 268
\\ \hline msdoor\_metis & 415,863 & 10,328,399 & 1,856
\\ \hline nasasrb\_metis & 54,870 & 1,366,097 & 287
\\ \hline PFlow\_742\_metis & 742,793 & 18,940,627 & 1,023
\\ \hline pwtk\_metis & 217,918 & 5,926,171 & 511
\\ \hline raefsky4\_metis & 19,779 & 674,195 & 111
\\ \hline ship\_003\_metis & 121,728 & 4,103,881 & 494
\\ \hline shipsec8\_metis & 114,919 & 3,384,159 & 456
\\ \hline StocF-1465\_metis & 1,465,137 & 11,235,263 & 11,446
\\ \hline thermal2\_metis & 1,228,045 & 4,904,179 & 45,483
\\ \hline tmt\_sym\_metis & 726,713 & 2,903,837 & 26,915
\\ \hline x104\_metis & 108,384 & 5,138,004 & 306
\end{longtblr}

\begin{longtblr}[
theme = ams-theme,
caption = {Matrices and statistics from SuiteSparse Matrix Collection \cite{davis2011university} post Eigen incomplete Cholesky \cite{eigenweb} used for the evaluation. The average wavefront size (Avg. wf) has been rounded down.},
label = {table:cholesky-graph-table},
]{ colspec = { Q[l,m] | Q[r,m] | Q[r,m] | Q[r,m]  }, row{1} = {font=\bfseries}, rowhead = 1, rowfoot = 0}
Matrix & Size & \#Non-zeros & Avg. wf
\\ \hline \hline af\_0\_k101\_iCh & 503,625 & 9,027,150 & 195
\\ \hline af\_shell7\_iCh & 504,855 & 9,046,865 & 668
\\ \hline apache2\_iCh & 715,176 & 2,766,523 & 79,464
\\ \hline audikw\_1\_iCh & 943,695 & 39,297,771 & 138
\\ \hline bmw7st\_1\_iCh & 141,347 & 3,740,507 & 340
\\ \hline bmwcra\_1\_iCh & 148,770 & 5,396,386 & 89
\\ \hline bone010\_iCh & 986,703 & 36,326,514 & 340
\\ \hline boneS01\_iCh & 127,224 & 3,421,188 & 245
\\ \hline boneS10\_iCh & 914,898 & 28,191,660 & 521
\\ \hline Bump\_2911\_iCh & 2,911,419 & 65,320,659 & 1,048
\\ \hline consph\_iCh & 83,334 & 3,046,907 & 78
\\ \hline Dubcova3\_iCh & 146,689 & 1,891,669 & 1,594
\\ \hline ecology2\_iCh & 999,999 & 2,997,995 & 142,857
\\ \hline Emilia\_923\_iCh & 923,136 & 20,964,171 & 511
\\ \hline Fault\_639\_iCh & 638,802 & 14,626,683 & 422
\\ \hline Flan\_1565\_iCh & 1,564,794 & 59,485,419 & 689
\\ \hline G3\_circuit\_iCh & 1,585,478 & 4,623,152 & 88,082
\\ \hline Geo\_1438\_iCh & 1,437,960 & 32,297,325 & 768
\\ \hline hood\_iCh & 220,542 & 5,494,489 & 1,050
\\ \hline Hook\_1498\_iCh & 1,498,023 & 31,207,734 & 649
\\ \hline inline\_1\_iCh & 503,712 & 18,660,027 & 679
\\ \hline ldoor\_iCh & 952,203 & 23,737,339 & 3,317
\\ \hline msdoor\_iCh & 415,863 & 10,328,399 & 956
\\ \hline offshore\_iCh & 259,789 & 2,251,231 & 1,114
\\ \hline parabolic\_fem\_iCh & 525,825 & 2,100,225 & 19,475
\\ \hline PFlow\_742\_iCh & 742,793 & 18,940,627 & 240
\\ \hline Queen\_4147\_iCh & 4,147,110 & 166,823,197 & 719
\\ \hline s3dkt3m2\_iCh & 90,449 & 1,921,955 & 104
\\ \hline Serena\_iCh & 1,391,349 & 32,961,525 & 940
\\ \hline shipsec1\_iCh & 140,874 & 3,977,139 & 259
\\ \hline StocF-1465\_iCh & 1,465,137 & 11,235,263 & 2,990
\\ \hline thermal2\_iCh & 1,228,045 & 4,904,179 & 47,232
\end{longtblr}

\begin{longtblr}[
theme = ams-theme,
caption = {Matrices and statistics in the Erd\H{o}s--R\'enyi data set used for the evaluation. The average wavefront size (Avg. wf) has been rounded down.},
label = {table:erdos-renyi-graph-table},
]{ colspec = { l | r | r | r  }, row{1} = {font=\bfseries}, rowhead = 1, rowfoot = 0}
Matrix & Size & \#Non-zeroes & Avg.\@ wf
\\ \hline \hline ER\_100k\_19m\_A & 100,000 & 19,999,021 & 109
\\ \hline ER\_100k\_19m\_B & 100,000 & 19,998,182 & 109
\\ \hline ER\_100k\_19m\_C & 100,000 & 19,997,897 & 107
\\ \hline ER\_100k\_19m\_D & 100,000 & 19,995,405 & 106
\\ \hline ER\_100k\_19m\_E & 100,000 & 19,994,516 & 107
\\ \hline ER\_100k\_19m\_G & 100,000 & 19,989,535 & 106
\\ \hline ER\_100k\_19m\_H & 100,000 & 19,999,989 & 110
\\ \hline ER\_100k\_1m\_A & 100,000 & 1,001,528 & 1,785
\\ \hline ER\_100k\_1m\_B & 100,000 & 1,000,452 & 1,818
\\ \hline ER\_100k\_1m\_C & 100,000 & 1,000,315 & 1,818
\\ \hline ER\_100k\_1m\_E & 100,000 & 1,000,044 & 1,666
\\ \hline ER\_100k\_1m\_F & 100,000 & 1,000,406 & 1,785
\\ \hline ER\_100k\_1m\_G & 100,000 & 1,001,171 & 1,724
\\ \hline ER\_100k\_1m\_H & 100,000 & 1,001,551 & 1,886
\\ \hline ER\_100k\_1m\_I & 100,000 & 1,000,237 & 1,639
\\ \hline ER\_100k\_1m\_J & 100,000 & 1,001,533 & 1,851
\\ \hline ER\_100k\_20m\_F & 100,000 & 20,001,732 & 107
\\ \hline ER\_100k\_20m\_I & 100,000 & 20,006,442 & 109
\\ \hline ER\_100k\_20m\_J & 100,000 & 20,003,479 & 109
\\ \hline ER\_100k\_4m\_A & 100,000 & 4,998,205 & 395
\\ \hline ER\_100k\_4m\_C & 100,000 & 4,999,271 & 398
\\ \hline ER\_100k\_4m\_G & 100,000 & 4,999,358 & 401
\\ \hline ER\_100k\_4m\_J & 100,000 & 4,996,501 & 414
\\ \hline ER\_100k\_5m\_B & 100,000 & 5,006,107 & 411
\\ \hline ER\_100k\_5m\_D & 100,000 & 5,001,575 & 404
\\ \hline ER\_100k\_5m\_E & 100,000 & 5,004,251 & 400
\\ \hline ER\_100k\_5m\_F & 100,000 & 5,002,190 & 400
\\ \hline ER\_100k\_5m\_H & 100,000 & 5,000,573 & 409
\\ \hline ER\_100k\_5m\_I & 100,000 & 5,001,846 & 400
\\ \hline ER\_100k\_999k\_D & 100,000 & 999,915 & 1,818
\end{longtblr}

\begin{longtblr}[
theme = ams-theme,
caption = {Matrices and statistics in the narrow bandwidth data set used for the evaluation. The average wavefront size (Avg. wf) has been rounded down.},
label = {table:random-bandwidth-graph-table},
]{ colspec = { l | r | r | r  }, row{1} = {font=\bfseries}, rowhead = 1, rowfoot = 0}
Matrix & Size & \#Non-zeroes & Avg.\@ wf
\\ \hline \hline NB\_p14\_b10\_100k\_A & 100,000 & 146,565 & 87
\\ \hline NB\_p14\_b10\_100k\_B & 100,000 & 146,328 & 115
\\ \hline NB\_p14\_b10\_100k\_C & 100,000 & 147,201 & 61
\\ \hline NB\_p14\_b10\_100k\_D & 100,000 & 146,972 & 73
\\ \hline NB\_p14\_b10\_100k\_E & 100,000 & 147,369 & 73
\\ \hline NB\_p14\_b10\_100k\_F & 100,000 & 146,855 & 111
\\ \hline NB\_p14\_b10\_100k\_G & 100,000 & 147,350 & 132
\\ \hline NB\_p14\_b10\_100k\_H & 100,000 & 147,412 & 85
\\ \hline NB\_p14\_b10\_100k\_I & 100,000 & 147,132 & 132
\\ \hline NB\_p14\_b10\_100k\_J & 100,000 & 146,781 & 105
\\ \hline NB\_p3\_b42\_100k\_A & 100,000 & 127,045 & 46
\\ \hline NB\_p3\_b42\_100k\_B & 100,000 & 127,019 & 55
\\ \hline NB\_p3\_b42\_100k\_C & 100,000 & 127,708 & 29
\\ \hline NB\_p3\_b42\_100k\_D & 100,000 & 127,341 & 45
\\ \hline NB\_p3\_b42\_100k\_E & 100,000 & 127,569 & 67
\\ \hline NB\_p3\_b42\_100k\_F & 100,000 & 127,137 & 47
\\ \hline NB\_p3\_b42\_100k\_G & 100,000 & 127,774 & 52
\\ \hline NB\_p3\_b42\_100k\_H & 100,000 & 127,029 & 46
\\ \hline NB\_p3\_b42\_100k\_I & 100,000 & 127,475 & 39
\\ \hline NB\_p3\_b42\_100k\_J & 100,000 & 127,275 & 62
\\ \hline NB\_p5\_b20\_100k\_A & 100,000 & 102,053 & 1,298
\\ \hline NB\_p5\_b20\_100k\_B & 100,000 & 102,621 & 1,063
\\ \hline NB\_p5\_b20\_100k\_C & 100,000 & 102,021 & 1,298
\\ \hline NB\_p5\_b20\_100k\_D & 100,000 & 102,968 & 1,075
\\ \hline NB\_p5\_b20\_100k\_E & 100,000 & 102,650 & 952
\\ \hline NB\_p5\_b20\_100k\_F & 100,000 & 102,309 & 1,162
\\ \hline NB\_p5\_b20\_100k\_G & 100,000 & 103,152 & 892
\\ \hline NB\_p5\_b20\_100k\_H & 100,000 & 102,324 & 1,190
\\ \hline NB\_p5\_b20\_100k\_I & 100,000 & 102,465 & 1,369
\\ \hline NB\_p5\_b20\_100k\_J & 100,000 & 102,244 & 1,010
\end{longtblr}

\clearpage

\section{Time and space complexity of GrowLocal} \label{sec:growlocal-complexity}

Below we provide a more detailed discussion of the time and space complexity of GrowLocal. We first restate Theorem \ref{thm:GrowLocal-complexity} more formally, including the assumption that the out-degrees and compute weights are within a constant factor. 

\begin{restatethm}{\ref*{thm:GrowLocal-complexity}}[formal]
Assume that there exist positive constants $\eta$ and $\varrho$ such that for all vertices $u, v \in V$, we have 
\begin{align}
\omega(u) &\le \eta \cdot \omega(v), \label{eq:near-homogenous-weight} \\
\deg^+(u) &\le \varrho \cdot |E| / |V|. \label{eq:near-homogenous-out-degree}
\end{align}
Then, the time complexity of GrowLocal is $O(|E| \cdot \log |V|)$ and the space complexity is $O(|E|)$.
\end{restatethm}

\begin{proof}
For each iteration $i$, denote by $\mathcal{V}_p^{(i)}$ the vertices which have been assigned to core $p$ in iteration $i$, by $\Gamma_p^{(i)} = |\mathcal{V}_p^{(i)}|$ the number of vertices assigned to core $p$ in iteration $i$, and by $\Omega_p^{(i)}$ the sum of weights of these vertices. From the algorithm design, specifically Line \ref{alg-line:GrowLocal-weight-balance}, we have for each iteration $i$ and each core $p$ that
\begin{equation} \label{eq:app-weight-balance-cores}
	\Omega_p^{(i)} \le \mu \Omega_1^{(i)},
\end{equation}
for some fixed positive constant $\mu$. We furthermore use the notations
\begin{align}
	\Gamma_{\max}^{(i)} &= \max_p \Gamma_p^{(i)}, \\
	\Gamma_{\Sigma}^{(i)} &= \sum_p \Gamma_p^{(i)}, \quad \text{and} \\
	\omega_{\min} &= \min_{v \in V} \omega(v)
\end{align}
for simplicity.

The key observation of the analysis is that during the formation of a superstep, the total number of vertices assigned over all the iterations is only a linear factor away from the number of vertices assigned in the final superstep. If we restrict ourselves to the first core, this is easy to see intuitively: the first iteration assigns $20$ vertices to core, the next iteration assigns $20 \cdot \tfrac{3}{2}$, the following assigns $20  \cdot  (\tfrac{3}{2})^2$, and so on. If the iteration that is accepted in the end assigns $\alpha^*$ vertices to the first core, then the preceding iterations assign at most $\alpha^*  \sum_{i=1}^\infty (\frac{2}{3})^i$ altogether, and the last examined iteration (which is rejected) possibly also assigns $\frac{3}{2} \alpha^*$; this is altogether still in $O(\alpha^*)$. Note that the ratio between the last two iterations can also be less than $\frac{3}{2}$, but this does not affect the claim.

Extending the argument above to all the cores is slightly more technical due to the different vertex weights. For a core $p$, we have
\begin{multline}
\Gamma_p^{(i)} = \sum_{v \in \mathcal{V}_p^{(i)}} 1 \le \sum_{v \in \mathcal{V}_p^{(i)}} \eta \frac{\omega(v)}{\sum_{u \in \mathcal{V}_1^{(i)}} \omega(u) \slash \sum_{u \in \mathcal{V}_1^{(i)}} 1} \\
= \eta \frac{\Omega_p^{(i)}}{\Omega_1^{(i)}} \sum_{u \in \mathcal{V}_1^{(i)}} 1  \le \eta \mu \sum_{u \in \mathcal{V}_1^{(i)}} 1 = \eta\mu \Gamma_1^{(i)}.
\end{multline}
Here, we used \eqref{eq:near-homogenous-weight} and \eqref{eq:app-weight-balance-cores}. Thus, if $\alpha^{(i)}=20 \cdot (\tfrac{3}{2})^{i-1}$ is the parameter used for iteration $i$, then we have
\begin{equation}
	\alpha^{(i)} \le \Gamma_{\max}^{(i)}  \le \eta \mu \cdot \alpha^{(i)}.
\end{equation}
This means that for iterations $i$ and $j$ with $i<j$, we get that the ratio $\Gamma_{\max}^{(i)} / \Gamma_{\max}^{(j)}$ is at most $\eta \mu \cdot (\frac{2}{3})^{j-i}$.

For the proof below, we actually require a similar upper bound on the ratio
\begin{equation} \label{eq:ratio-with-L}
	\frac{\Gamma_{\max}^{(i)} + L}{\Gamma_{\max}^{(j)} + L}
\end{equation}
instead. For this, we separate two cases, namely the first few iterations and all the remaining ones. Recall that $L$ was chosen as a constant. Assume first that we have $\Gamma_{\max}^{(i)} \geq L $. In this case, we can upper bound the expression above by
\begin{equation}
	 \frac{2 \cdot \Gamma_{\max}^{(i)}}{\Gamma_{\max}^{(j)}} \leq 2\eta\mu \cdot \left( \tfrac{2}{3} \right)^{j-i} \, .
\end{equation}
On the other hand, assume that $\Gamma_{\max}^{(i)} < L $. In this case, we can upper bound (\ref{eq:ratio-with-L}) simply by $1$. The key observation is that the number of these iterations with $\Gamma_{\max}^{(i)} < L $ is at most a constant in any superstep. Indeed, starting with $\alpha=20$ and multiplying by $\frac{3}{2}$ each round, we already have $\alpha > L$ by the $\lceil \log(L) / \log(1.5) \rceil$-th iteration, and hence $\Gamma_{\max}^{(i)} \geq L$ for $i \ge C_L \coloneqq \lceil \log(L) / \log(1.5) \rceil$. As such, there are only $C_L=O(\log(L))=O(1)$ distinct values where we will use this upper bound of $1$.

Our algorithm only accepts and saves an iteration if its parallelization rate $\beta$ is at least a $0.97$ factor of the best parallelization rate observed so far during this superstep. Hence, if iteration $j$ is accepted and $i$ is any iteration such that $i<j$, then we have
\begin{equation} \label{eq:app-growlocal-score-ineq}
\frac{\sum_p \Omega_p^{(j)}}{\max_p \Omega_p^{(j)} + L} \geq 0.97 \cdot \frac{\sum_p \Omega_p^{(i)}}{\max_p \Omega_p^{(i)} + L} \, .
\end{equation}
Due to our assumption on the vertex weights, we have for any iteration $i$ that
\begin{equation}
\omega_{\min} \cdot \Gamma_{\max}^{(i)} \leq \max_p \Omega_p^{(j)} \leq \eta\omega_{\min} \cdot \Gamma_{\max}^{(i)}
\end{equation}
as well as
\begin{equation}
\omega_{\min} \cdot \Gamma_{\Sigma}^{(i)} \leq \sum_p \Omega_p^{(i)} \leq \eta\omega_{\min} \cdot \Gamma_{\Sigma}^{(i)}.
\end{equation}
Using these in Inequality \eqref{eq:app-growlocal-score-ineq}, we get that
\begin{equation}
\frac{\eta\omega_{\min} \cdot \Gamma_{\Sigma}^{(j)}}{\omega_{\min} \cdot \Gamma_{\max}^{(j)} + L} \geq 0.97 \cdot \frac{\omega_{\min} \cdot \Gamma_{\Sigma}^{(i)}}{\eta\omega_{\min} \cdot \Gamma_{\max}^{(i)} + L} \, ,
\end{equation}
which further implies
\begin{equation}
\Gamma_{\Sigma}^{(i)} \leq \frac{\eta^2}{0.97} \cdot \frac{\Gamma_{\max}^{(i)} + L}{\Gamma_{\max}^{(j)} + L} \cdot \Gamma_{\Sigma}^{(j)} \, .
\end{equation}
Using our upper bounds on (\ref{eq:ratio-with-L}) and $\eta=O(1)$, this implies
\begin{equation}
	\Gamma_{\Sigma}^{(i)} \leq O(1) \cdot \Gamma_{\Sigma}^{(j)} \, 
\end{equation}
for $i \in \{1, ..., C_L - 1\}$, and
\begin{equation}
	\Gamma_{\Sigma}^{(i)} \leq O(1) \cdot \left( \tfrac{2}{3} \right)^{j-i} \cdot \Gamma_{\Sigma}^{(j)} \, 
\end{equation}
for $i \geq C_L$. Assuming that iteration $j$ is the final worthy iteration that is accepted for our superstep, this means that the total number of assigned vertices made in iterations $i \in \{ 1, 2, ..., j-1 \}$ is at most
\begin{equation}
\sum_{i=1}^{j-1} \Gamma_{\Sigma}^{(i)} \leq O(1) \cdot \Gamma_{\Sigma}^{(j)} \cdot \left( O(1) + \sum_{\ell=1}^{j-1} \left( \tfrac{2}{3} \right)^{\ell} \right) \, .
\end{equation}
With the geometric sum upper bounded by $2$, we get that the number of assigned vertices is indeed in $O(\Gamma_{\Sigma}^{(j)})$.

Note that the argument above does not consider the possible last iteration $(j+1)$ which is rejected by our algorithm. Nevertheless, we can bound the assignments here with a similar argument. If the iteration was rejected, then its parallelization score is at most as high as that of iteration $j$, i.e., $\beta^{(j+1)} \leq \beta^{(j)}$. As before, this implies
\begin{equation}
\frac{\omega_{\min} \cdot \Gamma_{\Sigma}^{(j+1)}}{\omega_{\min} \cdot \eta \cdot \Gamma_{\max}^{(j+1)} + L} \leq \frac{\omega_{\min} \cdot \eta \cdot \Gamma_{\Sigma}^{(j)}}{\omega_{\min} \cdot \Gamma_{\max}^{(j)} + L} \, ,
\end{equation}
and hence
\begin{equation}
	\Gamma_{\Sigma}^{(j+1)} \leq O(1) \cdot \frac{\Gamma_{\max}^{(j+1)} + L}{\Gamma_{\max}^{(j)} + L} \cdot \Gamma_{\Sigma}^{(j)} \, .
\end{equation}
As
\begin{equation} \begin{aligned}
	\frac{\Gamma_{\max}^{(j+1)} + L}{\Gamma_{\max}^{(j)} + L} &=  \frac{\Gamma_{\max}^{(j+1)} - \Gamma_{\max}^{(j)}}{\Gamma_{\max}^{(j)} + L} + 1  \\ 
	&\leq \frac{\Gamma_{\max}^{(j+1)}}{\Gamma_{\max}^{(j)}} + 1  \leq \tfrac{3}{2} \eta \mu + 1 \, ,
\end{aligned}
\end{equation}
we have that $\Gamma_{\Sigma}^{(j+1)}$ is again in $O(\Gamma_{\Sigma}^{(j)})$.

As such, the number of assignments in any superstep is linear in the size of the vertices that are finally scheduled. Summing this up over all the supersteps, we get that the algorithm altogether only makes $O(|V|)$ assignments over all supersteps and iterations.

For each assignment, the chosen vertex is selected from a priority queue data structure. Each such data structure contains at most $|V|$ vertices, so the cost of each assignment is $O(\log|V|)$. However, after each assignment of a concrete vertex $v$, we also need to examine all the children $u$ of $v$, check if $u$ also becomes ready with this assignment (i.e., all parents of $u$ are computed now), and if so, then also insert $u$ into such a priority queue at a time cost of $O(\log|V|)$. Since any vertex $v$ has at most $\varrho \cdot |E| / |V|$ children and $\varrho \in O(1)$, this sums up to a total of $O(|V| \cdot |E| / |V| \cdot \log|V|)$ over all the $O(|V|)$ assignments. 
This results in an overall time complexity of $O(|E| \cdot \log|V|)$ for the algorithm.

The space complexity of the algorithm is much easier to settle: each iteration only stores $O(|V|)$ data, and we store at most two iterations at a time, so the main bottleneck here is simply storing the input DAG itself, which requires $O(|E|)$ space.
\end{proof}

For the sake of completeness, we complement the theoretical bound in Theorem \ref{thm:GrowLocal-complexity} with empirical data in Figure \ref{fig:scheduling-time-empirical}.

\begin{figure}[!htpb]
	\centering
	\includegraphics[scale=0.473]{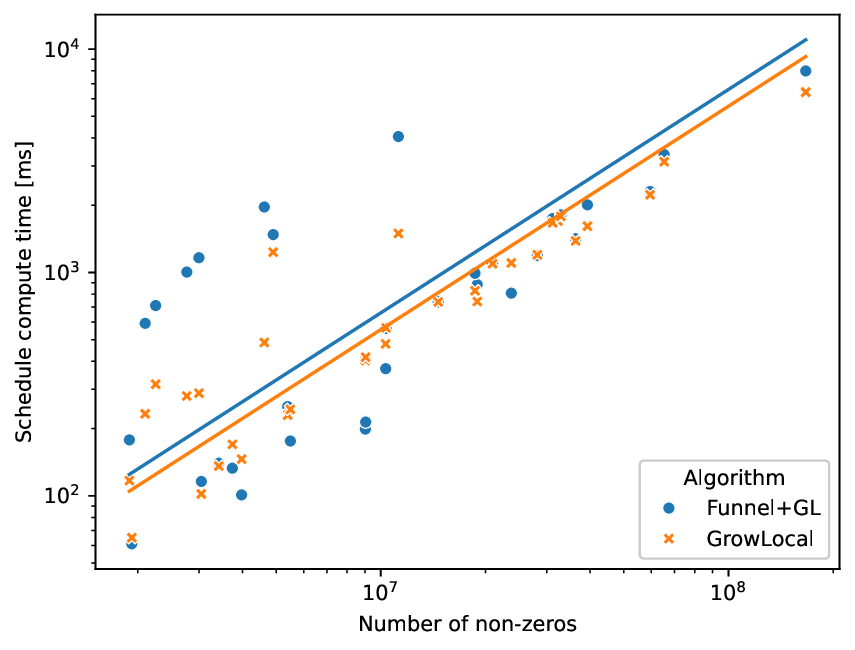}
	\caption{Scheduling time of Funnel+GL and GrowLocal on the SuiteSparse data set. The straight lines are the best square-mean-error fit of the family of curves $\log(y) = \log(x) + c$, where $c$ is a free parameter.}
	\label{fig:scheduling-time-empirical}
\end{figure}

\section{Further discussion}

\subsection{Comparison to barrier list schedulers}

Recall that one of the recent works on DAG scheduling with synchronization barriers is that of Papp \emph{et al.\@} ~\cite{Papp2024Efficient}, which analyzes schedulers not for a concrete application, more abstractly in terms of BSP cost. While the so-called BSPg scheduling heuristic in this work is rather different from our algorithm, the idea of prioritizing vertices in GrowLocal that are computable exclusively on a specific core was inspired by this algorithm. To show for completeness that our scheduler also significantly outperforms this BSPg algorithm, we also ran this BSPg scheduling algorithm as a baseline. The results show that GrowLocal achieves a factor $8.31\times$ geometric-mean speed-up to BSPg on the SutieSparse data set.

\subsection{On the synchronisation parameter $L$}

Recall that the parameter $L$ in GrowLocal, Algorithm~\ref{alg:GrowLocal}, represents the time cost of inserting a synchronization barrier, and is used to determine the parallelization rate in our GrowLocal algorithm.

If we consider the compute time of basic operations (additions or multiplications) with double precision numbers, and compare this to the time of synchronization, we get that the correct magnitude of $L$ ranges from a few hundreds to a few thousands on modern computing architectures. We ran some preliminary experiments with a few different choices of $L$ on this order of magnitude, and chose a value of $L=500$ based on these empirical observations.

\bibliography{references}
\bibliographystyle{alpha}

\end{document}